\def\P1{\mathcal{P}_1}
\definecolor{Gray}{rgb}{0.5,0.5,0.5}
\begin{document}
%

\title{Fast Generation of Large Scale Social Networks with Clustering}
\numberofauthors{1}

\author{
\alignauthor Joseph J. Pfeiffer III, Timothy La Fond, Sebastian Moreno, Jennifer Neville\\
       \affaddr{Purdue University}\\
       \affaddr{Department of Computer Science}\\
       \affaddr{West Lafayette, IN}\\
       \email{\{jpfeiffer, tlafond, smorenoa, neville\}@purdue.edu}
}

\maketitle

\begin{abstract}
A key challenge within the social network literature is the problem of network generation -- that is, how can we create synthetic networks that match characteristics traditionally found in most real world networks?  Important characteristics that are present in social networks include a power law degree distribution, small diameter and large amounts of clustering; however, most current network generators, such as the Chung Lu and Kronecker models, largely ignore the clustering present in a graph and choose to focus on preserving other network statistics, such as the power law distribution.  Models such as the exponential random graph model have a transitivity parameter, but are computationally difficult to learn, making scaling to large real world networks intractable.

In this work, we propose an extension to the Chung Lu random graph model, the Transitive Chung Lu (TCL) model, which incorporates the notion of a random transitive edge.  That is, with some probability it will choose to connect to a node exactly two hops away, having been introduced to a `friend of a friend'.  In all other cases it will follow the standard Chung Lu model, selecting a `random surfer' from anywhere in the graph according to the given invariant distribution.  We prove TCL's expected degree distribution is equal to the degree distribution of the original graph, while being able to capture the clustering present in the network.  The single parameter required by our model can be learned in seconds on graphs with millions of edges, while networks can be generated in time that is \emph{linear} in the number of edges.  We demonstrate the performance TCL on four real-world social networks, including an email dataset with hundreds of thousands of nodes and millions of edges, showing TCL generates graphs that match the degree distribution, clustering coefficients and hop plots of the original networks.

\end{abstract}

\category{G.2.2}{Graph Theory}{Network problems}
\category{G.3}{Probability and Statistics}{Markov processes}

\pagebreak

\section{Introduction}
A challenging problem within the social network community is generating graphs which adhere to certain statistics.  Due to the prevalence of `small world' graphs such as Facebook and the Internet \cite{Watts_SmallWorld}, models which attempt to capture properties of small world graphs such as a power law degree distribution, small diameter and clustering greater than randomly present for the sparsity of the network have become a much-discussed topic in the field \cite{Leskovec:Kronecker, Robins_ERGM, ChungLu, Pinar, BarAlb99, Seshadhri:11}.  The first random graph model, the Erdos-Renyi model\cite{Erdos_RandomGraph}, proposed random connections between nodes in the graph where each edge is sampled independently; however, this model has a Binomial degree distribution, not power law, and generally lacks clustering when generating sparse networks.  As a result, multiple attempts have been made to develop algorithms that generate graphs with small world network properties.

Exponential Random Graph Models (ERGM) extend the Erdos-Renyi model to allow additional statistics of the graph as parameters \cite{Robins_ERGM}.  The typical approach is to model the network under the assumption of Markov independence throughout the graph -- edges are only dependent on other edges that share the same node .  Using this, ERGMs define an exponential family of models using various Markov statistics of the graph, allowing for the incorporation of a transitivity parameter, then maximize the likelihood of the parameters given the graph.  The algorithms for learning and generating ERGMs are resource intensive and intractable for application to networks of more than a few thousand nodes.

As a result, newer efforts make scaleability an explicit goal when constructing models and algorithms. Notable examples include the Chung-Lu Graph Model (CL) \cite{ChungLu} and the Kronecker Product Graph Model (KPGM) \cite{Leskovec:Kronecker}.  CL is also an extension of the Erdos-Renyi model, but rather than creating a summary statistic based on the degrees, it generates a new graph such that the expected degree distribution matches the given distribution exactly. 
In contrast, KPGM learns a 2x2 matrix of parameters and lays down edges according to the Kronecker product of the matrix to itself $\log n$ times.  For large graphs this algorithm can learn the parameters defined by the 2x2 matrix in hours and can generate large graphs in minutes.

With CL and KPGM we have scalable algorithms for learning and generating graphs with hundreds of thousands of nodes and millions of edges.  However, in order to achieve scalability, a power law degree distribution and small diameter, both models have made the decision to ignore \emph{clustering} in their generated graphs.  This is not an insignificant consequence, as a small world network is in part defined by the clustering of nodes \cite{Watts_SmallWorld}.  While ERGM can potentially learn networks with clustering, the complexity of the model makes it a poor prospect when considering learning and generating graphs with massive size.

In order to generate sparse networks which can accurately capture the degree distribution, small diameter \emph{and} clustering, we propose to extend the CL algorithm in multiple ways.  The first portion of this paper will show how the naive fast generation algorithm for the CL model is biased, and we develop a correction to this problem.  Next, we introduce a generalization to the CL model known as the \emph{Transitive Chung Lu} (TCL).  To do this, we observe that the CL model is a `random surfer' model, similar to the PageRank random walk algorithm\cite{PageRank}.  However, CL always chooses the random surfer and has no affinity for nodes along transitive edges.  In contrast, our TCL model will sometimes choose to follow these transitive edges and then close a triangle rather than selecting a random node according to the surfer.  The probability of randomly surfing versus closing a triangle is a single parameter in our model which we can learn in \emph{seconds} from an observed graph, compared to the hours required to learn KPGM.
In short, the contributions in our work can be summarized as follows:

\begin{itemize}
\item Introduction of a `random triangle' parameter to the CL model
\item A correction to the `edge collision' problem seen in naive fast CL model generation
\item Analysis showing TCL has an expected degree distribution equal to the original input network's degree distribution
\item A learning algorithm for TCLs that runs in seconds for graphs with millions of edges
\item A generation algorithm for TCLs which runs on the same order as naive fast CL, and faster than KPGM
\item Empirical demonstrations that show the graphs generated from TCL match the degree distribution, clustering coefficient and hop plots of the original graph better than fast CL or KPGM
\end{itemize}

In section \ref{relwork} we discuss in more depth the ERGM, KPGM and CL models, while in section \ref{invariantmc} we outline the basis for the CL model.   Next, we show the fast method used for generating graphs in section \ref{sec:fastcor}, and our correction to it.  In section \ref{sec:tcl} we introduce our modification to the CL model, proving the expected degree distribution and demonstrating how to learn the transitive probability, while in section \ref{sec:tc} we analyze the runtimes of our fast CL correction and TCL.  In section \ref{sec:experiments} we learn the parameter and generate graphs which closely match the original graphs.  We end in section \ref{sec:conclusions} with conclusions and future directions.

\section{Related Work}\label{relwork}
Recently there has been a great deal of work focused on the development of generative models for small world and scale-free graphs (e.g.,~\cite{frank:86,Watts_SmallWorld,BarAlb99,kumar:00,WasPat96,Leskovec:Kronecker,ChungLu}). As an example, the Chung Lu model is able to generate a network which has a provable expected degree distribution equal to the degree distribution of the original graph.  The CL model, like many, attempts to define a process which matches a subset of features observed in a network.

The importance of the clustering coefficient has been demonstrated by Watts and Strogatz \cite{Watts_SmallWorld}. In particular, they show that small world networks (including social networks) are characterized by a short path length and large clustering coefficient.  One recent algorithm (Seshadri et al \cite{Seshadhri:11}) matches these statistics by putting together nodes with similar degrees and generating Erdös-Rényi graphs for each group. The groups are then tied together.  However, this algorithm needs a parameter to be set manually to work. Existing models that can generate clustering in the network generally do not have a training algorithm.

One method that can model clustering and can learn the associated parameter is the Exponential Random Graph Model (ERGM) \cite{WasPat96}. ERGMs define a probability distribution over the set of possible graphs with a log-linear model that uses feature counts of local graph properties. However, these models are typically hard to train as each update of the Fisher scoring function takes $O(n^2)$.  With real-world networks numbering in the hundreds of thousands if not millions of nodes, this makes ERGMs impossible to fit.

Another method is the Kronecker product graph
model (KPGM), a scalable algorithm for learning models of large-scale networks that empirically preserves a wide range of global properties of interest, such as degree distributions, and path-length distributions~\cite{Leskovec:Kronecker}. Thanks to these characteristics, KPGM has been selected as a generation algorithm for the Graph 500 Supercomputer Benchmark \cite{Pinar}.

The KPGM starts with a initial square matrix $\Theta_1$ of size $b\times b$, where each cell value is a probability. To generate a graph, the algorithm uses k Kronecker multiplications to grow until a determined size (obtaining $\Theta_k$ with $b^k=N$ rows and columns). Each edge is then independently sampled using a Bernoulli distribution with parameter $\Theta_k(i,j)$. A rough implementation of this algorithm has time $O(N^2)$, but improved algorithms can generate a network in $O(M\log N)$, where M is the number of edges in the network \cite{Leskovec:Kronecker}. According to \cite{Leskovec:Kronecker}, the learning time is linear in the number of edges.

\section{Chung-Lu Model and Invariant MC Distribution}\label{invariantmc}

Define graph $G = \left<V, E\right>$, where $V$ is a set of $N$ vertices, or nodes, and $E = V\times V$ is a set of $M$ edges or relationships between the vertices.  Let $A$ represent the adjacency matrix for $G$ where:

\begin{equation}
A_{ij} =
\begin{cases}
1 & \text{if $E$ contains the tuple } (v_i, v_j) \\
0 & \text{otherwise}
\end{cases}
\end{equation}

Next, define the diagonal matrix $D$ such that:

\begin{equation}
D_{ij} =
\begin{cases}
\sum_k A_{ik} & \text{if $i = j$}\\
0 & \text{otherwise}
\end{cases}
\end{equation}

The diagonal of matrix $D$ represents the degree of each node, where $D_{ii}$ is the degree of node $i$.
Finally, define the transition probability matrix $P$:
\begin{equation}
P_{ij} =
\frac{A_{ij}}{D_{ii}}
\end{equation}

This transition probability matrix is the probability of arriving at any node during a random walk that is uniform over the edges.  It is important to note that the rows of $P$ are normalized:

\begin{equation}
\sum_j P_{ij} = \sum_j \frac{A_{ij}}{D_{ii}} = \frac{1}{D_{ii} }\sum_j A_{ij} = \frac{D_{ii}}{D_{ii}} = 1
\end{equation}

\subsection{Chung Lu Model}
The Chung Lu model assigns edges to the graph by independently laying edges for each possible edge with probability:
$$
\frac{D_{ii}D_{jj}}{2M}
$$
It is assumed $D_{kk} < \sqrt{M}\  \forall k$.  The expected degree distribution for this graph is simply:
$$
E_G[D_{ii}^{CL}] = \sum_j \frac{D_{ii}D_{jj}}{2M} = D_{ii}\sum_j\frac{D_{jj}}{2M} = D_{ii}
$$
\subsection{Fast Chung Lu Model}

The invariant distribution of a graph is the distribution that when multiplied with the transition probability matrix returns itself:
\begin{equation*}
\pi *P = \pi
\end{equation*}
A possible candidate for such a distribution is defined in terms of the degrees of the network, where $\pi(i) = \frac{D_{ii}}{2M}$: 
$$
\pi(i) = \sum_j  \pi(j) * P_{ji} = \sum_j  \frac{D_{jj}}{2M} \cdot \frac{A_{ji}}{D_{jj}} = \sum_j  \frac{A_{ji}}{2M} = \frac{D_{ii}}{2M}
$$
If we assume the matrix $P$ is stationary (not changing as the random walker steps through the graph) and non-bipartite, the $\pi$ distribution is unique and tends to the stationary distribution $\pi$ as the number of steps tends to infinity \cite{Randomwalks}.

In \cite{Pinar}, the authors describe a fast edge-laying algorithm which runs in $O(M)$.  The algorithm proceeds by creating a vector of size $O(M)$, then places the IDf of each node $v_i$ in the vector $D_{ii}$ times.  It is not hard to see that since the sum of the degrees equals the number of edges in the graph, each node can place its ID exactly $D_{ii}$ times without collision, and without leaving empty space in the vector.

Next, a node ID $v_i$ is drawn from the vector -- this can be done in $O(1)$ by drawing a uniform random variable between $1$ and $M$, and using offsets to index into the array.  The next step is to draw another \emph{independent} vertex $v_j$ from the vector and place the edge between the two sampled nodes.  In a special graph, the regular graph, we can show that the probability of an edge existing is exactly the same for the fast CL method and the slow.
\begin{algorithm}
\caption{CL($\pi,N,|E|$)}
\label{CLalg}
\begin{algorithmic}[1]
\STATE $E^{CL} = \{\}$
\STATE $initialize(queue)$
\FOR{iterations}
\IF{queue is empty}
\STATE $v_j = pi\_sample(\pi)$
\ELSE
\STATE $v_j = pop(queue)$
\ENDIF
\STATE $v_i = pi\_sample(\pi)$
\IF{$e_{ij} \not\in E^{CL}$}
\STATE $E^{CL} = E^{CL} \cup e_{ij}$
\ELSE
\STATE $push(queue, v_i)$
\STATE $push(queue, v_j)$
\ENDIF
\ENDFOR
\STATE $return(E^{TCL})$
\end{algorithmic}
\end{algorithm}

\begin{proposition}\label{edgepresent}
In a regular graph, the probability of an edge existing in the Fast Chung-Lu model is the same as the probability of an edge existing in the Slow Chung-Lu Model.
\end{proposition}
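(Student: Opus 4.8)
The plan is to exploit the defining feature of a regular graph --- every node has the same degree $d$ --- to collapse both models to symmetric, label-invariant processes, and then match them edge by edge. First I would record the basic consequences of regularity: since $D_{ii}=d$ for every node and $\sum_k D_{kk} = 2M$, we have $2M = Nd$, and therefore the invariant distribution becomes uniform, $\pi(i) = D_{ii}/(2M) = d/(2M) = 1/N$ for all $i$. This single observation is what drives the whole argument.

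Next I would compute the slow CL edge probability directly. By definition the slow model lays edge $(i,j)$ independently with probability $D_{ii}D_{jj}/(2M)$, which under regularity equals $d^2/(2M) = d/N$, the same value for every pair $i \neq j$; so in the slow model all edges are equiprobable. Then I would turn to the fast model and argue it is equiprobable as well. Because $\pi$ is uniform, each of the two independent endpoint draws selects every node with probability $1/N$, so the sampling distribution --- and the entire edge-laying process, including the collision handling of Algorithm~\ref{CLalg} --- is invariant under any relabeling of the nodes. By this exchangeability, every potential edge $\{i,j\}$ must have exactly the same marginal probability of appearing in the generated graph.

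The last step is to pin down that common value and check it agrees with the slow model. Since the process places a total of $M$ edges among the $\binom{N}{2}$ symmetric candidate pairs, each edge exists with probability $M/\binom{N}{2}$, and I would verify that this coincides with the slow-model value $d^2/(2M)$ after accounting for the conventions on self-loops and on counting each undirected edge once. The heart of the proof is thus the symmetry reduction: once regularity forces $\pi$ to be uniform, both models are invariant under node permutations, so matching them reduces to matching a single scalar.

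The main obstacle I expect is the with-replacement nature of the fast sampler: unlike the slow model's independent Bernoulli trials, the fast model can draw the same pair more than once, so ``probability of existing'' is not literally ``expected number of draws.'' Regularity is exactly what rescues the argument --- it makes the collision pattern identical across all edges, so no edge is disproportionately suppressed and the relative probabilities are undistorted. I anticipate the delicate bookkeeping to be (i) confirming that the queue mechanism of Algorithm~\ref{CLalg} re-inserts colliding endpoints without breaking the node-label symmetry, and (ii) reconciling the constants between $M/\binom{N}{2}$ and $d^2/(2M)$; both become routine once the symmetry reduction is in place. This also foreshadows why the general case is biased: when degrees differ, $\pi$ is non-uniform, high-degree nodes collide more often, and the clean symmetry --- and hence the exact match --- is lost, which is precisely the defect the later correction is designed to repair.
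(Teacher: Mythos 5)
Your route is genuinely different from the paper's. The paper fixes an ordered pair $(v_j,v_i)$, notes that $v_j$ is drawn as the first endpoint roughly $\bar D_j/2$ times over the $M$ iterations, and computes the probability that any single insertion emanating from $v_j$ lands on $v_i$ by summing over the scenarios in which other partners are drawn first; regularity makes these nested sums telescope to $\bar D_i/2M$ per insertion, and multiplying by the expected number of insertions (and doubling for the two orientations) gives $\bar D_i \bar D_j/2M$. You instead observe that regularity makes $\pi$ uniform, hence the entire fast process is exchangeable under node relabelings, so all candidate edges have equal marginal probability, and you reduce the problem to identifying one scalar. The exchangeability reduction is sound and, frankly, cleaner than the paper's computation: it handles the collision/queue mechanics in one stroke rather than tracking them term by term.

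However, the proof does not close at the one step you defer, and that step is the entire content of the proposition. As written, your candidate value is $M/\binom{N}{2} = \frac{Nd/2}{N(N-1)/2} = \frac{d}{N-1}$, whereas the slow model gives $\frac{d^2}{2M} = \frac{d}{N}$; these agree only up to a factor $N/(N-1)$, i.e.\ asymptotically, not exactly. The discrepancy is precisely the self-loop convention you flag but do not resolve: the paper's slow model implicitly includes $j=i$ in its candidate space (its expected-degree computation sums over all $j$), and the fast sampler can likewise draw $v_i=v_j$, so the correct denominator is not $\binom{N}{2}$. If you include loops with their $1{:}2$ sampling weight relative to non-loop pairs, the common probability for a non-loop pair comes out to $2M/N^2 = d/N$ and the constants do match; without that accounting, the claimed equality is false as stated. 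You should also make explicit that your counting argument (exactly $M$ distinct edges placed) presupposes the corrected, queue-based version of Algorithm~\ref{CLalg} that reattempts after every collision; the naive fast sampler that simply drops colliding draws places fewer than $M$ edges, and your scalar identification would fail for it. So: right symmetry idea, but the decisive constant verification must actually be carried out, and the tentative formula you wrote down is not the one that works.
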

\begin{proof}
Let $v_i, v_j$ be two nodes in our network.  According to the Fast Chung Lu model we will select every node at random with replacement, meaning the number of times the node $v_j$ will be selected as the first node is $\bar{D}_j$, where $\bar{D}$ is the degree for every node in the network, and $j$ is used for notation, to indicate the particular node.  Since this graph is regular, $\bar{D}_i = \bar{D}_j\ \forall i,j$.  The probability of an edge being placed from $v_j$ to $v_i$ is the sum:

\begin{equation}
\begin{split}
P(e_{ij} | v_j) &= \frac{\bar{D}_i}{2M} + \sum_{v_k \in V, k \neq i} \frac{\bar{D}_k}{2M} \frac{\bar{D}_i}{2M - \bar{D}_k} +\dots \\
&= \frac{\bar{D}_i}{2M} + \frac{\bar{D}_i}{2M}\sum_{v_k \in V, k \neq i} \frac{\bar{D}_k}{2M - \bar{D}_k} +\dots \\
&= \frac{\bar{D}_i}{2M} + \frac{\bar{D}_i}{2M}\frac{2M - \bar{D}_i}{2M - \bar{D}_k} +\dots \\
&= 2\frac{\bar{D}_i}{2M}
\end{split}
\end{equation}

The sum continues to $\bar{D}_j$.  The probability of inserting on the $d$ insertion is therefore:
{\small
\begin{equation}
\begin{split}
&=\sum_{v_{k_1}} \frac{\bar{D}_k}{2M}  \sum_{v_{k_2}} \frac{\bar{D}_{k_2}}{2M - \bar{D}}\dots \sum_{v_{k_d}} \frac{\bar{D}_{k_d}}{2M - (d-1)\bar{D}}\frac{\bar{D}_i}{2M - d\bar{D}}\\
&=\frac{\bar{D}_i}{2M}\sum_{v_{k_1}} \frac{\bar{D}_{k_1}}{2M - \bar{D}}  \sum_{v_{k_2}} \frac{\bar{D}_{k_2}}{2M - 2\bar{D}}\dots \sum_{v_{k_d}} \frac{\bar{D}_{k_d}}{2M - d\bar{D}}\\ 
&=\frac{\bar{D}_i}{2M}
\end{split}
\end{equation}
}
where $v_{k_l} \in V, k_l \neq k_{l-1}, \dots, k_{1}, i$.  Thus each time we place an edge from $D_j$, we place it with probability $\frac{\bar{D}}{2M}$ on node $v_i$.  After $d$ insertions, the probability of having an edge $e_{ji}$ is then $d\frac{\bar{D}}{2M}$.  If we draw $M$ times for the first node, the expected number of draws on $D_j$ is then $\frac{D}{2}$, meaning the probability of connecting $v_j$ to $v_i$ is $\frac{DD}{4M}$.  If we include the opposite direction, we get $\frac{DD}{2M}$, which is the same as the probability in the slow method.
\end{proof}
Usually we do not have a regular graph, meaning the breakdown between the degrees does not have the convenient cancellation of sums like the regular graph.  However, for sparse graphs we assume the proportion of degrees is close enough to one another such that the summations effectively cancel.  The difference between the two probabilities is illustrated in Figure \ref{fig:edgeprobs}. To do this, we show the edge probabilities along the x-axis as predicted by the original CL method ($\frac{D_{ii}D_{jj}}{2M}$) versus a simulation of 10,000 networks for the fast edge probabilities.  The y-axis indicates the proportion of generated networks which have the edge (we plot the top 10 degree nodes' edges).    The dataset we use is a subset of the Purdue University Facebook network, a snapshot of the class of 2012 with approximately 2000 nodes and 15,000 edges -- using this smaller subset exaggerates the collisions and their effects on the edge probabilities.  In panel (a), we show the probabilities for the original network, where the probabilities are small and unaffected by the fast model.  

\begin{figure}
\centering
\subfloat[Original]{\includegraphics[width=.49\columnwidth]{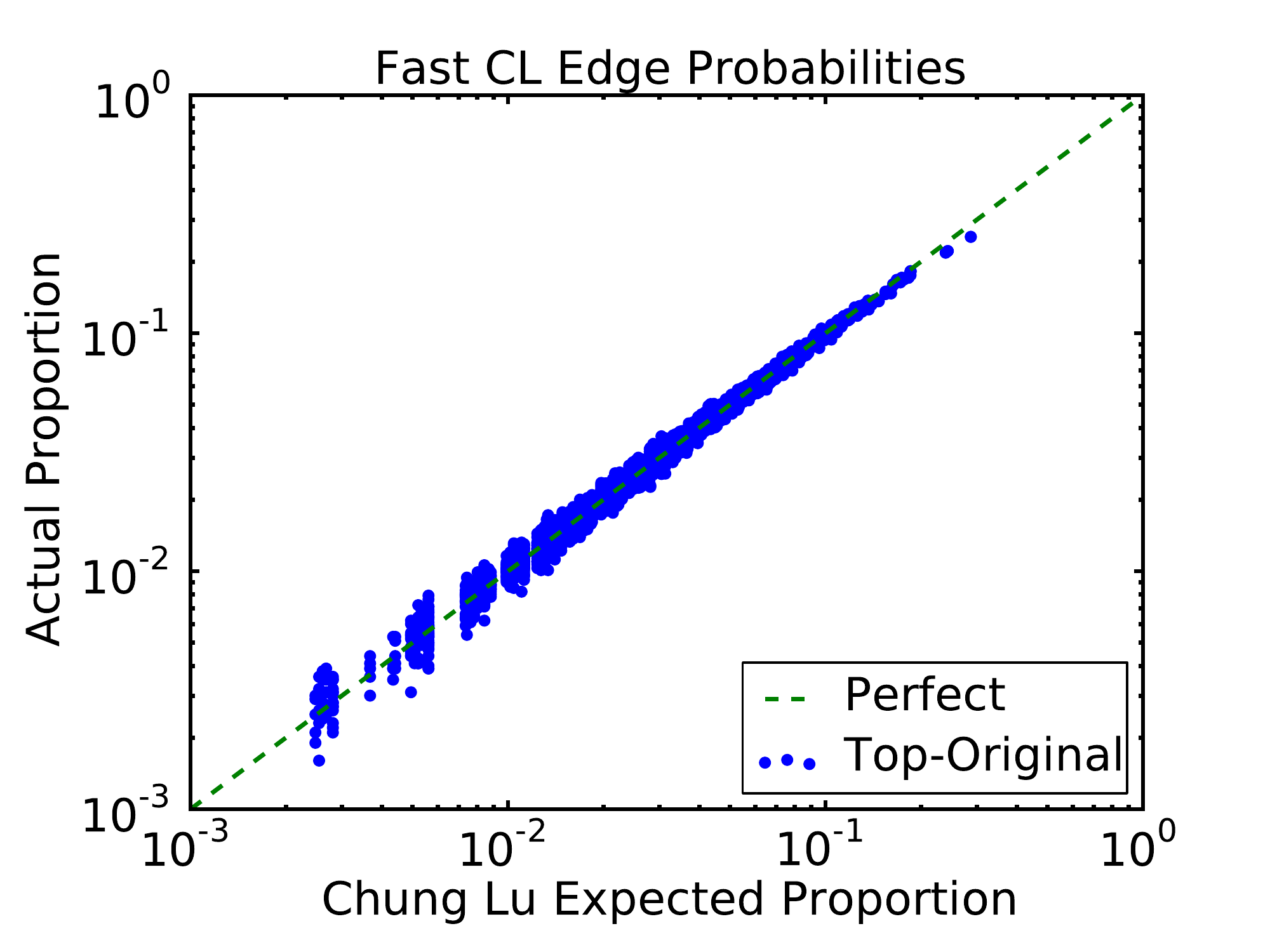}}
\subfloat[Adjusted]{\includegraphics[width=.49\columnwidth]{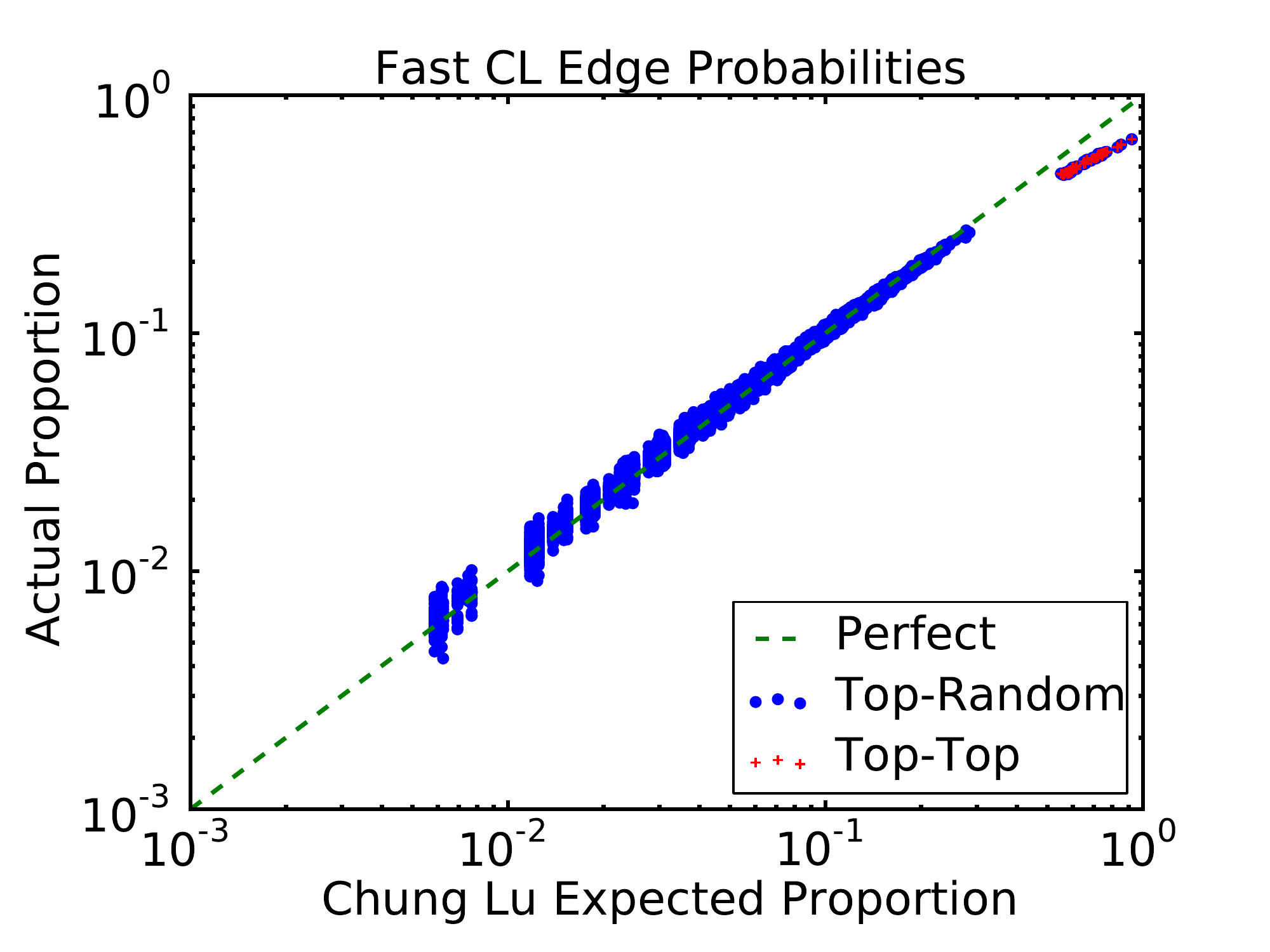}}
\caption{Fast CL edge probability (y-axis) versus Slow CL edge probability (x-axis).  In (a) we show the true Facebook 2012 Network and in (b) augmented to force high degree nodes to approach $\sqrt{2M}$ degree, and connect to each other.}
\label{fig:edgeprobs}
\end{figure}

To test the limits of the method, in panel (b) we take the high degree nodes from original network and expand them such that they have near $\sqrt{2M}$ edges elsewhere in the network.  Additionally, these high degree nodes are connected \emph{to each other}, meaning they approach the case where $\frac{D_{ii}D_{jj}}{2M} > 1$.  Another 10000 networks are generated from the fast model to match this augmented network.  We see that the randomly inserted edges still follow the predicted slow CL value, although the probabilities are slightly higher due to the increased degrees.  It is only in the far extreme case where we connect $\sqrt{2M}$ degree nodes to one another that we see a difference in the realized probability from the CL probability.  These account for $.05\%$ of edges in the augmented network, which has been created specifically to test for problem cases.  For social networks, it is unlikely for these situations to arise.

\section{Correction to Fast Model}\label{sec:fastcor}
In order to actually generate this graph it is efficient to use rejection sampling.  Namely, we draw two nodes from $\pi$ and attempt to place an edge between them.  If an edge already exists, we reject the sample and draw again.  In general, as we are using sparse graphs we will not have many \emph{collisions}, and so few samples are rejected.  

One thing to notice that the algorithm assumes we can draw an \emph{edge} only once (sampled without replacement), but \emph{nodes} are drawn multiple times (sampled with replacement).  However, the samples are rejected according to whether or not an \emph{edge} exists.  As certain nodes have a higher degree, the probability of collision is higher for them, meaning their edges are rejected more frequently than low degree nodes.  Rejection of those \emph{node samples} means that the nodes have their degree under sampled.

\begin{proposition}\label{collision}
When repeated samples of the same edge are dropped, the nodes of high degree have their degrees underestimated.
\end{proposition}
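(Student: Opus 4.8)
The plan is to compute the expected realized degree of a node under the naive rejection scheme and to show that it falls below the target $D_{ii}$ by an amount that grows superlinearly in $D_{ii}$, so that the heaviest nodes are suppressed most. First I would model one generation run as a sequence of $T$ independent draws of an ordered node pair from $\pi\times\pi$, with $T=M$ fixed up front so that the draws are genuinely i.i.d.; then the unordered edge $(v_i,v_j)$ with $i\neq j$ is produced on any single draw with probability $p_{ij}=2\pi(i)\pi(j)=\frac{D_{ii}D_{jj}}{2M^2}$. Because a draw of an already-present edge is simply discarded, the edge $(v_i,v_j)$ survives in the output exactly when it is drawn at least once, giving $P(e_{ij}\in E)=1-(1-p_{ij})^T$. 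Writing $\hat D_i$ for the realized degree of $v_i$, this yields $E[\hat D_i]=\sum_{j\neq i}\bigl(1-(1-p_{ij})^T\bigr)$.

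Next I would expand the summand as $1-(1-p_{ij})^T = T p_{ij}-\binom{T}{2}p_{ij}^2+O(p_{ij}^3)$. The leading term reproduces the correct degree to leading order: using $\sum_j D_{jj}=2M$ and $T=M$ one gets $\sum_{j}T p_{ij}=D_{ii}$, matching the slow Chung-Lu expectation of Section~\ref{invariantmc} and Proposition~\ref{edgepresent}. The crucial object is the second-order term, which is precisely the probability mass lost to repeated draws of the same edge, namely $\binom{T}{2}\sum_{j}p_{ij}^2\approx\frac{M^2}{2}\sum_j\frac{D_{ii}^2 D_{jj}^2}{4M^4}=\frac{D_{ii}^2}{8M^2}\sum_j D_{jj}^2$. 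This deficit is negative and proportional to $D_{ii}^2$, so $E[\hat D_i]\approx D_{ii}-c\,D_{ii}^2$ with $c=\tfrac{1}{8M^2}\sum_j D_{jj}^2>0$ a factor common to all nodes. The absolute shortfall therefore grows quadratically in the target degree and the relative shortfall $c\,D_{ii}$ grows linearly, so the highest-degree nodes are exactly those whose realized degree is suppressed most, which is the claim.

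The intuition behind the quadratic term is worth stating directly: conditioned on $v_i$ having accumulated neighbor set $N(v_i)$, a fresh draw involving $v_i$ collides with probability $\sum_{k\in N(v_i)}\pi(k)$, which increases as $v_i$ gathers neighbors, so a high-degree node wastes a larger fraction of its draws on collisions that are then dropped. The main obstacle is rigor around the i.i.d.\ approximation and the truncation: a real run may stop after a target edge count rather than a fixed $T$, and the expansion silently discards the $O(p_{ij}^3)$ tail. I would control these by fixing $T=M$ so the draws are independent, and by invoking the standing sparsity assumption $D_{kk}<\sqrt{M}$, under which each $p_{ij}=O(1/M)$; this makes the truncated tail uniformly a factor $O(1/M)$ smaller than the quadratic deficit and leaves the monotone dependence on $D_{ii}$ intact. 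The remaining work is the routine verification that the $\sum_j D_{jj}^2$ prefactor does not depend on $i$ to leading order, so that the ordering of shortfalls across nodes is governed solely by $D_{ii}^2$.
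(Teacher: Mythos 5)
Your argument is correct, and it takes a genuinely more quantitative route than the paper. The paper's proof is a two-line pairwise comparison: it observes that the probability of a repeated draw of a given edge scales as the \emph{square} of the edge probability, $\left(\pi(i)\pi(k)\right)^2$, so pairs involving high-degree nodes collide more often, and it stops there without ever computing the resulting degree. You instead model the naive scheme as $T=M$ i.i.d.\ draws from $\pi\times\pi$ with duplicates discarded, write the exact expectation $E[\hat D_i]=\sum_{j\neq i}\bigl(1-(1-p_{ij})^T\bigr)$, and extract the deficit $\approx \frac{D_{ii}^2}{8M^2}\sum_j D_{jj}^2$, which makes the bias explicit: quadratic absolute shortfall, linear relative shortfall, with a prefactor common to all nodes. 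The underlying mechanism (the loss is second order in $p_{ij}$, hence superlinear in degree) is the same in both proofs, but your version actually delivers the statement about \emph{degrees} rather than about per-pair collision probabilities, and it is the cleaner argument. Two small caveats: your claim that the truncated tail is uniformly a factor $O(1/M)$ smaller than the quadratic term is too strong at the extreme $D_{kk}\to\sqrt{M}$, where the ratio $\binom{T}{3}p_{ij}^3/\binom{T}{2}p_{ij}^2\approx Tp_{ij}/3$ is only bounded by a constant below $1$; however, since $Tp_{ij}<1/2$ under the paper's assumption, the terms of $1-(1-p_{ij})^T$ alternate with decreasing magnitude, so the partial sums bracket the truth and the sign and order of the deficit survive. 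Also note the prefactor is really $\sum_{j\neq i}D_{jj}^2$, which you correctly flag as depending on $i$ only at lower order.
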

\begin{proof}
Let $v_i, v_j$ be two nodes in our network such that $\pi(i) > \pi(j)$, and let $v_k$ be a node attempting to lay an edge with another node.  Comparing the probability of collision on $v_i,v_k$ vs. $v_j,v_k$ gives us:
\begin{equation*}
\begin{split}
&P(e_{ik}) = \pi(i)\pi(k) > \pi(j)\pi(k) \\
&P(collision_{ik}) = P(e_{ik})^2 = \left(\pi(k)\pi(i)\right)^2> \left(\pi(k)\pi(j)\right)^2
\end{split}
\end{equation*}
Since more edges are laid by high degree nodes than low, this implies it is more likely for high degree nodes such as $v_i$ to experience collisions on edge insertions, biasing their expected degrees.
\end{proof}
\begin{figure}
\centering
\subfloat[Epinions]{\includegraphics[width=.49\columnwidth]{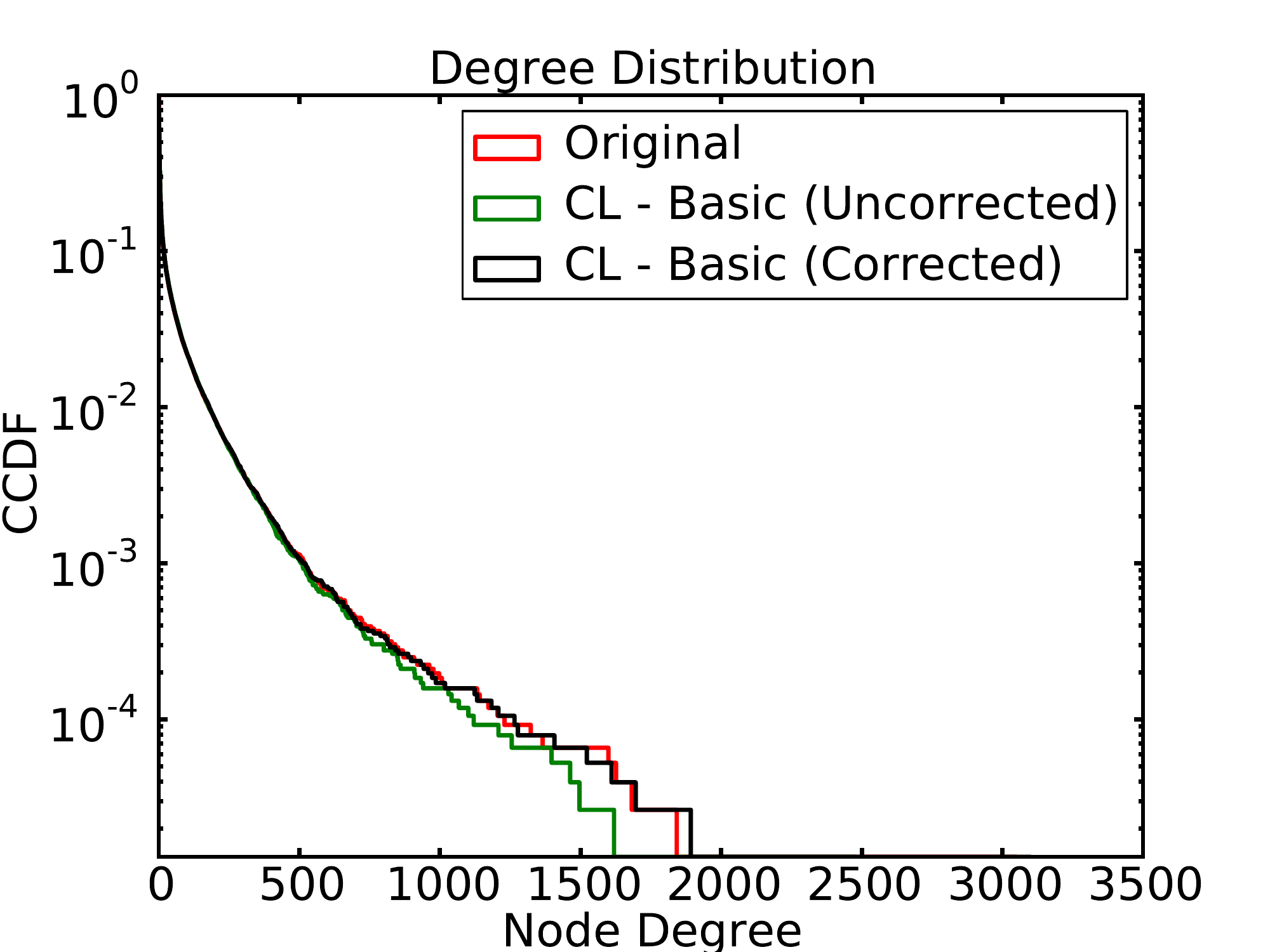}}
\subfloat[Facebook]{\includegraphics[width=.49\columnwidth]{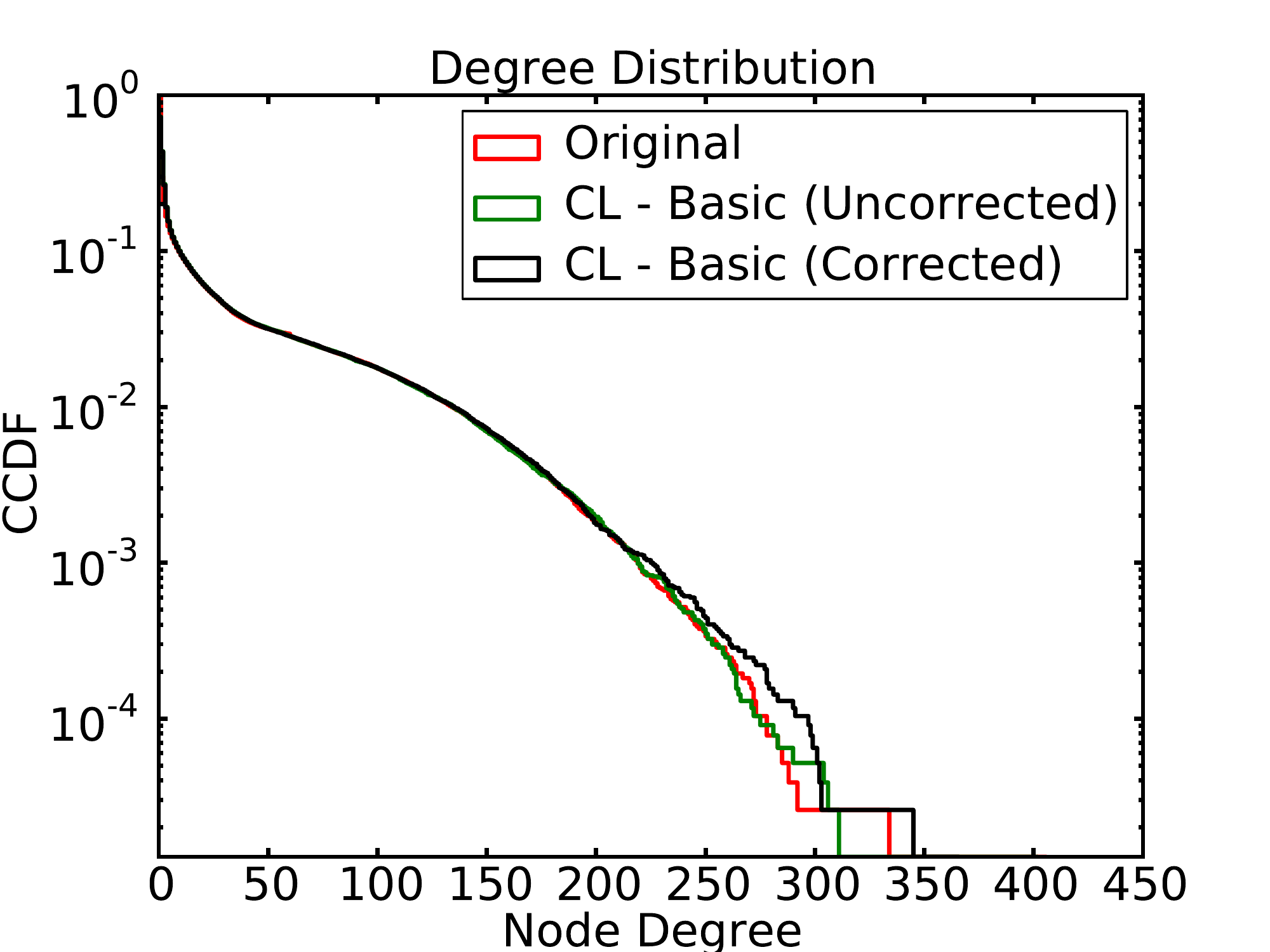}}
\caption{Comparison of Basic and Corrected CCDF on two datasets.  The original method underestimates the degree of the high degree nodes.}
\label{fig:corrected}
\end{figure}

One simple approach to correct this problem is to sample $2M$ nodes independently from $\pi$.  The can then be paired together and the pairings checked for duplicates.  Should any edges be laid more than once across a pair of nodes, the entire set of nodes is randomly permuted and rematched.  This process continues until no duplicate pairings are found.

The general idea behind this random permutation motivates our correction to the fast method.  While the random permutation of all $2M$ nodes is somewhat extreme, a method which permutes only a few edges in the graph -- the ones with collisions -- is feasible.  With this in mind, our solution to this problem is straightforward.  Should we encounter a collision, we place both vertices in a waiting queue.  Before continuing with regular insertions we will attempt to select neighbors for all nodes in the waiting queue.  Should the new edge for a node in the queue also encounter a collision, the chosen neighbor is also placed in the queue, and so forth.  This ensures that if a node is `due' for a new edge but is prevented from receiving it due to a collision, the node is `slightly permuted' by exchanging places with a node sampled later.   

 This shuffling ensures that $M$ edges actually be placed, which is needed by proposition \ref{edgepresent}, without affecting the degree distribution as can happen by proposition 2.  Furthermore, as we leave an edge if it ever occurs, the probability defined by proposition \ref{edgepresent} is never \emph{lowered} for the edges which have multiple occurrences, only raised to ensure $M$ edges are placed.

Our correction to the fast CL model assumes independence between the edge placements and the current graph configuration.  This independence only truly holds when collisions are allowed (i.e. when generating a multigraph).  In practice edge placements are not truly independent, as we disallow the placement of edges that already exist in the graph.  The correction we have described removes the bias described in proposition \ref{collision} but is not guaranteed to generate graphs exactly according to the original $\pi$ distribution.  The fast version of the graph generation algorithm must project from a space of multigraphs down into a space of simple graphs, and this projection is not necessarily uniform over the space of graphs.  However, our empirical results show that on sparse graphs our correction removes the majority of the bias due to collisions and that the bias from the projection is negligible, meaning we can treat graphs form the corrected fast generation as being drawn from the original Chung-Lu graph distribution.  While the slow Chung-Lu model is guaranteed to produce unbiased $\pi$ distributed graphs, the fast method produces graphs which are nearly indistinguishable from the slow method and runs an order of magnitude faster.

In Figure \ref{fig:corrected}, we can see the effect of the correction on two labeled datasets, Epinions and Facebook (described in section  \ref{sec:experiments}).  The green line corresponding to the simple insertion technique underestimates the degrees of the high degree nodes in both instances.  The correction results in having a much closer match on the high degree nodes.  By utilizing this correction, we can generate graphs whose degree distributions are unaffected by the possibility of collision and are able to generate graphs in $O(M)$.

\section{Transitive Chung-Lu Model}\label{sec:tcl}

\begin{algorithm}
\caption{TCL($\pi,\rho,N,|E|,iterations$)}
\label{TCLalg}
\begin{algorithmic}[1]
\STATE $E^{TCL} = CL(\pi,N,|E|)$
\STATE $initialize(queue)$
\FOR{iterations}
\IF{queue is empty}
\STATE $v_j = pi\_sample(\pi)$
\ELSE
\STATE $v_j = pop(queue)$
\ENDIF
\STATE $r = bernoulli\_sample(\rho)$
\IF{$r = 1$}
\STATE $v_k = uniform\_sample(E^{TCL}_j)$
\STATE $v_i = uniform\_sample(E^{TCL}_k)$
\ELSE
\STATE $v_i = pi\_sample(\pi)$
\ENDIF
\IF{$e_{ij} \not\in E^{TCL}$}
\STATE $E^{TCL} = E^{TCL} \cup e_{ij}$
\STATE \textcolor{Gray}{// remove oldest edge from $E^{TCL}$}
\STATE $E^{TCL} = E^{TCL} \setminus min(time(E^{TCL}))$
\ELSE
\STATE $push(queue, v_i)$
\STATE $push(queue, v_j)$
\ENDIF
\ENDFOR
\STATE $return(E^{TCL})$
\end{algorithmic}
\end{algorithm}

A large problem with the Chung-Lu model is the lack of \emph{transitivity} captured by the model.  As many social networks (among others) are formed via friendships, drawing randomly from distribution of nodes across the network fails to capture this property.  We propose the \emph{Transitive Chung Lu} model described in algorithm \ref{TCLalg}, which has a probability of a `random surfer' connecting two nodes across the network but has an additional probability of creating a new transitive edge across a pair of nodes connected by a 2 hop path.  The CL model is now a special case of TCL where $\rho=0$ and edge selection is always done through a random walk.  In the TCL model, we include the transitive edges \emph{while} maintaining the same expected invariant distribution as the CL model.  Thus the TCL model is guaranteed to have an expected degree distribution equal to that of the original network.

We begin by constructing a graph of $M$ edges using the standard Chung-Lu model as described above.  This gives us an initial edge set $E$ which has the same expected degree distribution as the original data.  We then initialize a queue which will be used to store nodes that have a higher priority for receiving an edge.  Next, we define an update step which replaces the oldest edge in the graph with a new one selected according to the TCL model and repeat this process for the specified number of iterations.  If the priority queue is not empty, we will choose the next node in the queue to be $v_j$, the first endpoint of the edge; otherwise, on line 5 we sample $v_j$ using the $\pi$ distribution.  With probability $\rho$ we will add an edge between $v_j$ and some node $v_i$ through transitive closure by choosing an intermediate node $v_k$ uniformly from $j$'s neighbors, then selecting $v_i$ uniformly from $k$'s neighbors.  In contrast, with probability $(1-\rho)$ we use the 'random surfer' method by randomly choosing $v_i$ from the graph according to the invariant distribution $\pi$.

Either method, transitive or random surfer, returns an additional node to the method to use as the other endpoint.  If the selected edge is not already part of the graph, we will add it and remove the oldest edge in the graph (continually removing the warmup CL edges).  If the selected edge is already present in the graph, we place the selected endpoint nodes into the priority queue (lines 21 and 22).  We repeat this replacement operation many times to ensure that the original graph is mostly replaced and then return the set of edges as the new graph.  In practice, we find that $M$ replacements -- enough to remove all edges generated originally by CL -- is sufficient.  

In order to show that this update operation preserves the expected degree distribution, we prove the following:
\begin{enumerate}
\item From any starting point $v_j$ the probability of a two hop walk ending on node $v_i$ is $\pi(i)$
\item The probability that TCL selects an edge $e_{ij}$ is the same as CL selecting $e_{ij}$
\item The change in the expected degree distribution after a TCL iteration is zero
\end{enumerate}
As the graph is initialized to a CL that has expected degree distribution equal to the original graph, and updates are performed that that preserve the expected degree distribution, the final graph will have the same expected degree distribution through induction. 
Our update step is a stochastic combination of two edge insertion operations: one that samples an edge using the $\pi$ distribution as in the standard CL model, and one that samples an edge based on 2 hop paths.  Naturally the CL insertion select edges based on the $\pi$ distribution by definition.  Now we will show that sampling an edge using the existing 2 hop paths also selects edges according to the $\pi$ distribution.

\begin{theorem}\label{2hop}
Starting from any node $v_j$, if the edges in the graph are distributed according to $\pi(k)\pi(i)$ and the walker traverses two hops by sampling uniformly over the edges of $v_j$ and subsequently the selected neighbor $v_k$ of $v_j$, the probability of ending this walk on node $v_i$ is $\pi(i)$.
\end{theorem}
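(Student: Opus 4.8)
The plan is to compute the two-hop end-point probability by conditioning on the intermediate node $v_k$ and then collapsing the sum using the fact that $\pi$ is a probability distribution, so $\sum_k \pi(k) = 1$ (which holds because $\pi(k) = D_{kk}/2M$ and $\sum_k D_{kk} = 2M$). I would begin by writing the quantity of interest as a sum over the possible middle nodes,
\begin{equation*}
P(v_i \mid v_j) = \sum_{v_k} P(v_j \to v_k)\, P(v_k \to v_i),
\end{equation*}
so that the entire argument reduces to understanding a \emph{single} hop: the probability that, sitting at a fixed node and drawing one of its incident edges uniformly at random, the walker lands on a prescribed target.

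The key step is to translate the hypothesis ``edges are distributed according to $\pi(k)\pi(i)$'' into a single-hop transition probability. Reading this as the statement that the expected edge mass between any pair $v_a,v_b$ is proportional to $\pi(a)\pi(b)$, the total edge mass incident to $v_a$ is proportional to $\pi(a)\sum_{b}\pi(b) = \pi(a)$. Drawing a uniform incident edge of $v_a$ therefore lands on $v_b$ with probability
\begin{equation*}
P(v_a \to v_b) = \frac{\pi(a)\pi(b)}{\sum_{b'}\pi(a)\pi(b')} = \frac{\pi(a)\pi(b)}{\pi(a)} = \pi(b).
\end{equation*}
The crucial observation is that the normalization cancels the degree of the source node, so a single uniform-edge hop is \emph{memoryless}: wherever it starts, it lands on $v_b$ with probability exactly $\pi(b)$. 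Applying this to both hops and substituting into the sum then yields
\begin{equation*}
P(v_i \mid v_j) = \sum_{v_k} \pi(k)\,\pi(i) = \pi(i)\sum_{v_k}\pi(k) = \pi(i),
\end{equation*}
and the starting node $v_j$ drops out entirely, as it must, since only the final hop fixes the landing distribution.

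The step I expect to be the main obstacle is justifying the single-hop claim rigorously rather than heuristically, and it has two subtleties. First, the assertion is really one about the \emph{expected} Chung-Lu graph: the form $\pi(k)\pi(i)$ describes the generative edge distribution, so the cancellation of the source degree is an expectation statement and depends essentially on $\pi$ summing to one. Second, for the second hop one must verify that arriving at $v_k$ along the first edge does not bias the distribution of $v_k$'s remaining edges; this is exactly where the independence of edge placements in the Chung-Lu construction is needed, guaranteeing the second endpoint is $\pi$-distributed no matter how $v_k$ was reached. In particular this includes the possibility of walking back to $v_j$, which is harmless because it occurs with probability $\pi(j)$, consistent with the claimed $\pi$ distribution over all end points.
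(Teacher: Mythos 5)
Your proposal is correct and follows essentially the same route as the paper: both arguments work in the expected Chung--Lu graph, write the walk probability as the $\pi(a)\pi(b)$ edge mass normalized over possible endpoints, and use $\sum_b \pi(b)=1$ to cancel the source node's degree. The only difference is organizational -- you isolate a single-hop ``memorylessness'' lemma ($P(v_a \to v_b)=\pi(b)$) and compose it twice, whereas the paper normalizes the full two-hop path probability at the end -- and your version carries the same (acknowledged) heuristic reliance on expected degrees and independent edge placements that the paper's does.
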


\begin{proof}
We can represent the probability of a particular path $v_j \rightarrow v_k \rightarrow v_i$ existing in the graph as
\begin{equation}
P(path_{jki}) = \frac{D_{jj}D_{kk}}{2M}\frac{D_{kk}D_{ii}}{2M}
\end{equation}
The probability of following this path in a uniform random walk \emph{in the CL graph}, when it exists, is:
\begin{equation}
P( walk_{jki})= \frac{ 1 }{D_j^{CL} D_k^{CL}} 
\end{equation}  

To calculate the probability of a walk starting on node $j$ and ending at node $i$, we have to normalize by the probability of walking a 2 hop path from node $j$ to any other node $i'$ in the graph:
\begin{align*}
P( walk_{ji} ) &=  { \sum_{k \in V} P(path_{jki}) \cdot P( walk_{jki} ) \over \sum_{i' \in V} \sum_{k \in V} P(path_{jki'}) \cdot P( walk_{jki'} )} \\
&= \frac{ \sum_{k \in V} \frac{D_{jj}D_{kk}}{2M}\frac{D_{kk}D_{ii}}{2M} \frac{1}{D_j^{CL} D_k^{CL}}} {\sum_{i' \in V} \sum_{k \in V} \frac{D_{jj}D_{kk}}{2M}\frac{D_{kk}D_{i'i'}}{2M} \frac{1}{D_j^{CL} D_k^{CL}} }\\
&= \frac{ \sum_{k \in V}  D_j D_k^2 D_i \frac{1}{D_j^{CL} D_k^{CL}} } {\sum_{i' \in V} \sum_{k \in V}D_j D_k^2 D_{i'}  \frac{1}{D_j^{CL} D_k^{CL}} } \\
&= \frac{ \sum_{k \in V} D_k D_i \frac{1}{D_k^{CL}}} {\sum_{i' \in V} \sum_{k \in V} D_k D_{i'} \frac{1}{D_k^{CL}} } \\
&= \frac{ D_i \sum_{k \in V} D_k \frac{1}{D_k^{CL}}} {\sum_{i' \in V} D_{i'} \sum_{k \in V} D_k \frac{1}{D_k^{CL}} } \\
&= \frac{D_i}{\sum_{i' \in V} D_i'}= \frac{D_i}{2M} = \pi(i)
\end{align*}







So regardless of the starting node $j$, the probability of landing on $i$ after traveling 2 hops uniformly over the edges is $\pi(i)$.
\end{proof}

Utilizing the above theorem, we next show the probability of an edge $e_{ij}$ existing in the graph is $\pi(j)\pi(i)$.

\begin{theorem}
The Transitive Chung-Lu model selects edge $e_{ij}$ for insertion with probability
\begin{equation}
\begin{split}
P(e_{ij}) = \pi(i) * \pi(j)
\end{split}
\end{equation}

\end{theorem}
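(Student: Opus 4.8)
The plan is to decompose a single TCL update into the choice of the first endpoint $v_j$ and the choice of the second endpoint $v_i$, and then to marginalize over the Bernoulli coin $r$ that selects between the transitive branch and the random-surfer branch. The key observation driving the whole argument is that both branches produce the second endpoint according to the \emph{same} distribution $\pi$, so the mixing weight $\rho$ should cancel and leave exactly the per-step edge-selection probability of standard CL.

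First I would fix $v_j$ and treat it as given; by construction it is produced with probability $\pi(j)$, whether by a direct $\pi$-sample or, through the queue bookkeeping, by re-insertion of a node that was itself originally $\pi$-sampled. Conditioning on $v_j$, I split on the value of $r$. When $r=1$ (probability $\rho$) the second endpoint comes from the two-hop walk of Theorem~\ref{2hop}, which ends on $v_i$ with probability $\pi(i)$; when $r=0$ (probability $1-\rho$) it is drawn directly from $\pi$, again giving $v_i$ with probability $\pi(i)$.

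Combining the two cases, the conditional law of the second endpoint is
\begin{equation*}
P(v_i \mid v_j) = \rho\,\pi(i) + (1-\rho)\,\pi(i) = \pi(i),
\end{equation*}
independent of both $v_j$ and $\rho$. Multiplying by $\pi(j)$ for the first endpoint then gives $P(e_{ij}) = \pi(j)\,\pi(i) = \pi(i)\,\pi(j)$, which is precisely the probability with which CL selects the ordered pair $e_{ij}$, completing the argument.

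The step I expect to be the main obstacle is the appeal to Theorem~\ref{2hop}, whose hypothesis requires the edges currently in the graph to be distributed according to $\pi(k)\pi(i)$. This is exactly the inductive invariant that TCL maintains: the graph is initialized by CL so that the invariant holds at the start, and the present theorem serves as the inductive step showing that one update preserves it. I would therefore state explicitly that Theorem~\ref{2hop} is applied under this inductive hypothesis. A secondary detail worth verifying is that the queue does not disturb the marginal law of $v_j$: a node is enqueued only after a collision, having already been drawn according to $\pi$, so its later reuse as a first endpoint leaves the $\pi(j)$ factor intact.
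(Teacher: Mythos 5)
Your proposal is correct and follows essentially the same route as the paper: condition on the first endpoint $v_j$ drawn from $\pi$, split on the Bernoulli choice, invoke Theorem~\ref{2hop} to show the transitive branch also yields $v_i$ with probability $\pi(i)$, and observe that $\rho\,\pi(i)+(1-\rho)\,\pi(i)=\pi(i)$. Your explicit remarks that Theorem~\ref{2hop} is applied under the inductive invariant on the edge distribution and that the queue does not perturb the marginal law of $v_j$ are points the paper leaves implicit, but they do not change the argument.
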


\begin{proof}


The inductive step randomly selects a node $v_j$ from the invariant distribution $\pi$ to be the first endpoint of a new edge.  From $v_j$, we have two options to complete the edge: with probability $\rho$ we use the transitive closure to walk 2 hops to find the other endpoint, and with probability $1-\rho$ we perform a random surf using $\pi$.  The invariant distribution for $\pi^{TCL}(i)$ can then be written as:


$$
\pi^{TCL}(i) = \sum_j \pi(j) \left[ \rho * P(walk_{ji})  + (1-\rho) \pi(i) \right] \\
$$

In theorem \ref{2hop} we showed that $P(walk_{ji}) = \pi(i)$.  Now the probability of selecting edge $e_{ij}$ can be written as:

\begin{equation*}
\begin{split}
P(e_{ij}) &= \pi(j) * \pi^{TCL} \\
&= \pi(j) * (\rho * \pi(i) + (1 - \rho) * \pi(i))\\
&= \pi(j) * \pi(i)
\end{split}
\end{equation*}
\end{proof}

Therefore, the inductive step of TCL will place the endpoints of the new edge according to $\pi$. 

\begin{corollary}
The expected degree distribution of the graph produced by TCL is the same as the degree distribution of the input graph.
\end{corollary}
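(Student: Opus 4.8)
The plan is to prove the corollary by induction on the number of TCL update iterations, using the two theorems already established as the engine of the inductive step. The quantity I would track is $E[D_{ii}]$, the expected degree of each node $v_i$ under the current edge set, with the goal of showing it equals the input degree $D_{ii}$ throughout.

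For the base case I would invoke the Chung--Lu construction used to initialize $E^{TCL}$ on line 1 of Algorithm \ref{TCLalg}. As shown in the Chung--Lu subsection, the CL model produces a graph with $E_G[D_{ii}^{CL}] = D_{ii}$, so immediately after initialization the expected degree distribution already matches the input exactly; this establishes the induction hypothesis at iteration zero.

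For the inductive step I would show that a single TCL iteration leaves $E[D_{ii}]$ unchanged for every node — this is item (3) in the list preceding Theorem \ref{2hop}. Each iteration performs exactly two degree-altering operations: it deletes the oldest edge and inserts one new edge. Writing the net change as
$$E[D_{ii}^{new}] - E[D_{ii}^{old}] = E[\Delta^{+}_i] - E[\Delta^{-}_i],$$
where $\Delta^{+}_i$ and $\Delta^{-}_i$ are the degree contributions of the inserted and deleted edges, it suffices to show these two expectations are equal. The preceding theorem gives that the inserted edge has endpoints distributed according to $\pi(i)\pi(j)$, so $E[\Delta^{+}_i]$ is a fixed function of $\pi(i)$ alone. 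The remaining work is to argue that the deleted (oldest) edge carries the same endpoint distribution, so that $E[\Delta^{-}_i]$ is the identical function of $\pi(i)$ and the two cancel.

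The main obstacle is precisely this last point: justifying that the oldest edge in the graph is distributed according to $\pi(i)\pi(j)$, the same law as a freshly inserted edge. I would handle it by observing that the initial CL edges are laid down with endpoint probability proportional to $\pi(i)\pi(j)$ (the CL edge probability $D_{ii}D_{jj}/2M$), and that the preceding theorem shows every replacement edge is inserted under that same law; hence the edge set is, in distribution, homogeneous in edge age — the edge that happens to be oldest is no different in law from any other, so its removal subtracts the same expected degree from $v_i$ that a fresh insertion adds. Once the cancellation $E[\Delta^{+}_i] = E[\Delta^{-}_i]$ is in hand, induction over the iterations carries $E[D_{ii}] = D_{ii}$ to the final graph. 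I would flag the two approximations the paper itself later acknowledges — the rejection of already-present edges (lines 16--22) and the multigraph-to-simple-graph projection — as the reasons this equality is exact only for the idealized independent-insertion process, and note that empirically the residual bias is negligible on sparse graphs.
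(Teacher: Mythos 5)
Your proposal matches the paper's own argument essentially step for step: induction anchored at the CL initialization (whose expected degrees equal the input degrees), with the inductive step showing the inserted edge contributes expected degree $\pi(i)$ to node $v_i$ while the deleted oldest edge---having itself been laid by either the CL process or a transitive closure, both $\pi$-distributed---subtracts the same expectation, so the net change is zero. Your added remarks on edge-age homogeneity and the collision/projection caveats are consistent with what the paper says elsewhere; no gap.
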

\begin{proof}
The inductive step of TCL places an edge with endpoints distributed according to $\pi$, so the expected increase in the degree of any node $v_i$ is $\pi(i)$.  However, the inductive step will also remove the oldest edge that was placed into the network.  Since the oldest edge can only have been placed in the graph through a Chung-Lu process or a transitive closure, the expected decrease in the degree is also $\pi(i)$, which means the expected change in the degree distribution is zero.  Because the CL initialization step produces a graph with expected degree distribution equal to the input graph's distribution, and the TCL update step causes zero expected change in the degree distribution the output graph of the TCL algorithm has expected degree distribution equal to the input graph's distribution by induction.  
\end{proof}
This means we are placing edges according to $\pi(i)\pi(j)$, and doing $M$ insertions.  This is the same model as shown for the fast CL method, which also inserts $M$ edges according to $\pi(i)\pi(j)$, meaning that if the fast CL method follows slow CL, TCL does as well.  In practice, TCL and CL capture the degree distribution well (section \ref{sec:experiments}).

\subsection{Fitting Transitive Chung Lu}
Now that we have introduced a $\rho$ parameter which controls the proportion of transitive edges in the network we need a method for learning this parameter from the original network.  For this, we need to estimate the probability $\rho$ by which edge formation is done by triadic closure, and the probability $1 - \rho$ by which the random surfer forms edges.  We can accomplish this estimation using an Expectation Maximization algorithm.  First, let $z_{ij} \in Z$ be latent variables on each $e_{ij} \in E$ with values $z_{ij} \in \{ 1, 0\}$, where $1$ indicates the edge $e_{ij}$ was laid by a transitive closure and $0$ indicates the edge was laid by a random surfer.  Although the $Z$ values are unknown we can jointly estimate them with $\rho$ using EM.

We can now define the conditional probability of placing an edge $e_{ij}$ from starting node $v_j$ given the method $z_{ij}$ by which the edge was placed:

\begin{equation*}
\begin{split}
P\left(e_{ij} | z_{ij} = 1,  v_j, \rho^t\right) =& \rho^t\sum_{v_k \in e_{j*}} \frac{\mathbb{I}[v_i \in e_{k*}]}{D_{jj}} \frac{1}{D_{kk}} \\
P\left(e_{ij} | z_{ij} = 0,  v_j, \rho^t\right) =& (1-\rho^t)\cdot\pi(i) \\
\end{split}
\end{equation*}

Given the starting node $v_j$, the probability of the edge existing between $v_i$ and $v_j$, given that the edge was placed due to a triangle closure is $\rho$ times the probability of walking from $v_j$ to a mutual neighbor of $i$ and $j$ and then continuing the walk on to $i$, while $1-\rho$ is the probability the edge was placed by a random surfer.  We now show the EM algorithm.

\subsubsection*{Expectation}
Note that the conditional probability of $z_{ij}$, given the edge $e_{ij}$ and $\rho$, can be defined in terms of the probability of an edge being selected by the triangle closure divided by the probability of the edge being laid by any method.  Using Bayes' Rule, our conditional distribution on $Z$ is simply:
\begin{equation*}
\begin{split}
&P\left(z_{ij} = 1| e_{ij},  v_j, \rho^t\right) \\
&=\frac{ \rho^t\left[\sum_{v_k \in e_{j*}} \frac{\mathbb{I}[v_i \in e_{k*}]}{D_{jj}} \frac{1}{D_{kk}}\right]}{ \rho^t\left[\sum_{v_k \in e_{j*}} \frac{\mathbb{I}[v_i \in e_{k*}]}{D_{jj}} \frac{1}{D_{kk}}\right] +  (1-\rho^t)\left[\pi(i)\right]} \\
\end{split}
\end{equation*}

And our expectation of $z_{ij}$ is
\begin{equation*}
\mathbf{E}[z_{ij} | \rho^t] = P\left(z_{ij} = 1| e_{ij},  v_j, \rho^t\right)
\end{equation*}

\subsubsection*{Maximization}
To maximize this expectation, we note that $\rho$ is a Bernoulli variable representing $P(z_{ij}=1)$.  We sample a set of edges $\mathbb{S}$ uniformly from the graph to use as evidence when updating $\rho$.  The variables $z_{ij}$ are \emph{conditionally independent} given the edges and nodes in the graph, meaning the MLE update to $\rho$ is then calculating the expectation of $z_{ij} \in \mathbb{S}$ and then normalizing over the number of edges in $\mathbb{S}$:

\begin{equation*}
\rho^{t+1}= \frac{1}{\left| \mathbb{S} \right|}\sum_{z_{ij} \in \mathbb{S}} \mathbf{E}[z_{ij} | \rho^t] 
\end{equation*}

The method we used to sample these edge subsets was to select them uniformly from the set of all edges.  This can be done quickly using the node ID vector we constructed for sampling from the $\pi$ distribution.  As any node $i$ appears $D_i$ times in this vector, sampling a node from the vector and then uniformly sampling one of its edges gives us a $\frac{D_i}{M} * \frac{1}{D_i} = \frac{1}{M}$ probability of sampling any given edge.  We gathered subsets of 10000 edges per iteration and our EM algorithm converges in just a few seconds, even on datasets with millions of edges.  Figure \ref{fig:convergences} shows the convergence time on each of the datasets.

\begin{figure}
\centering
\subfloat[]{\includegraphics[width=.49\columnwidth]{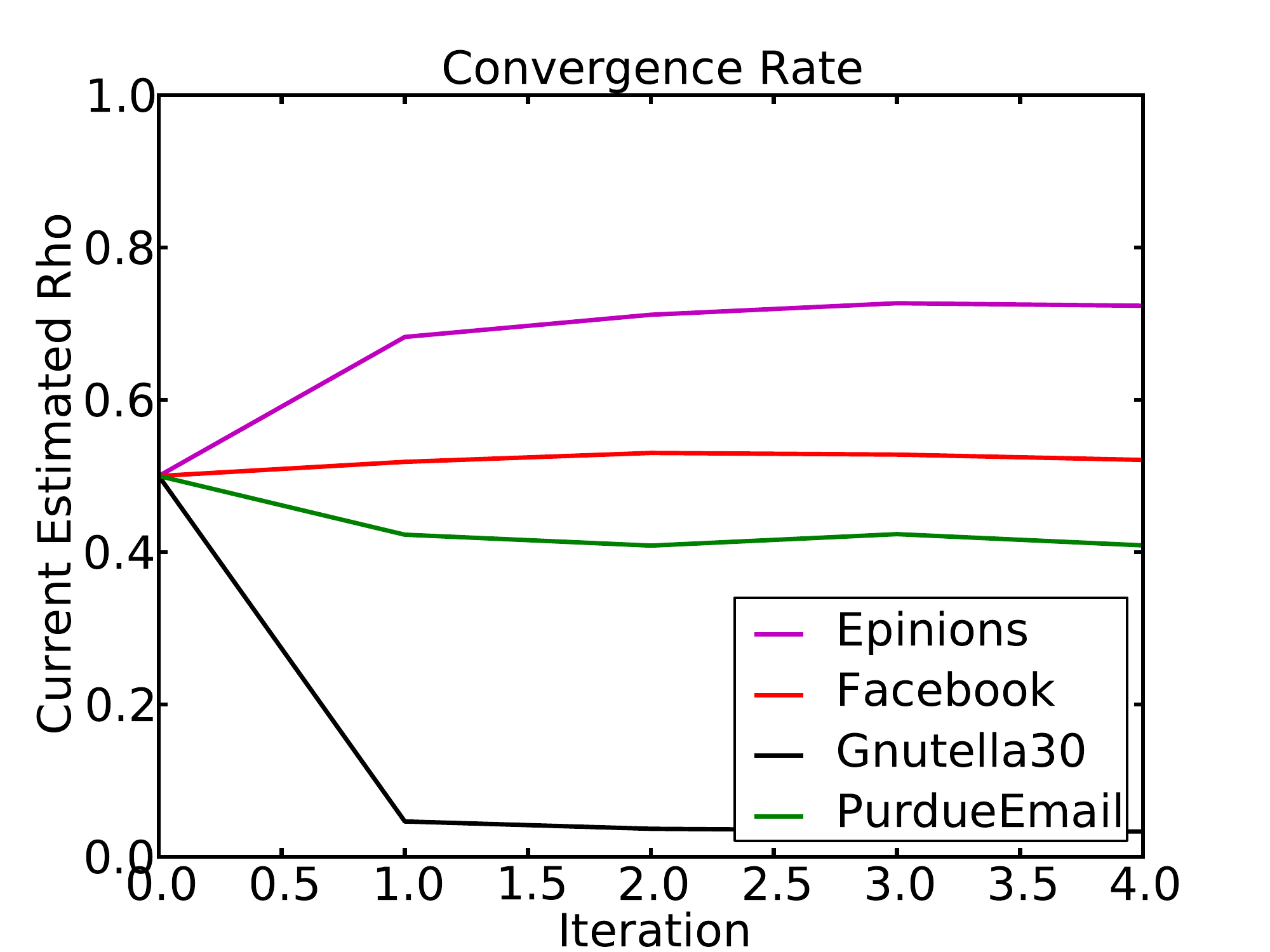}}
\subfloat[]{\includegraphics[width=.49\columnwidth]{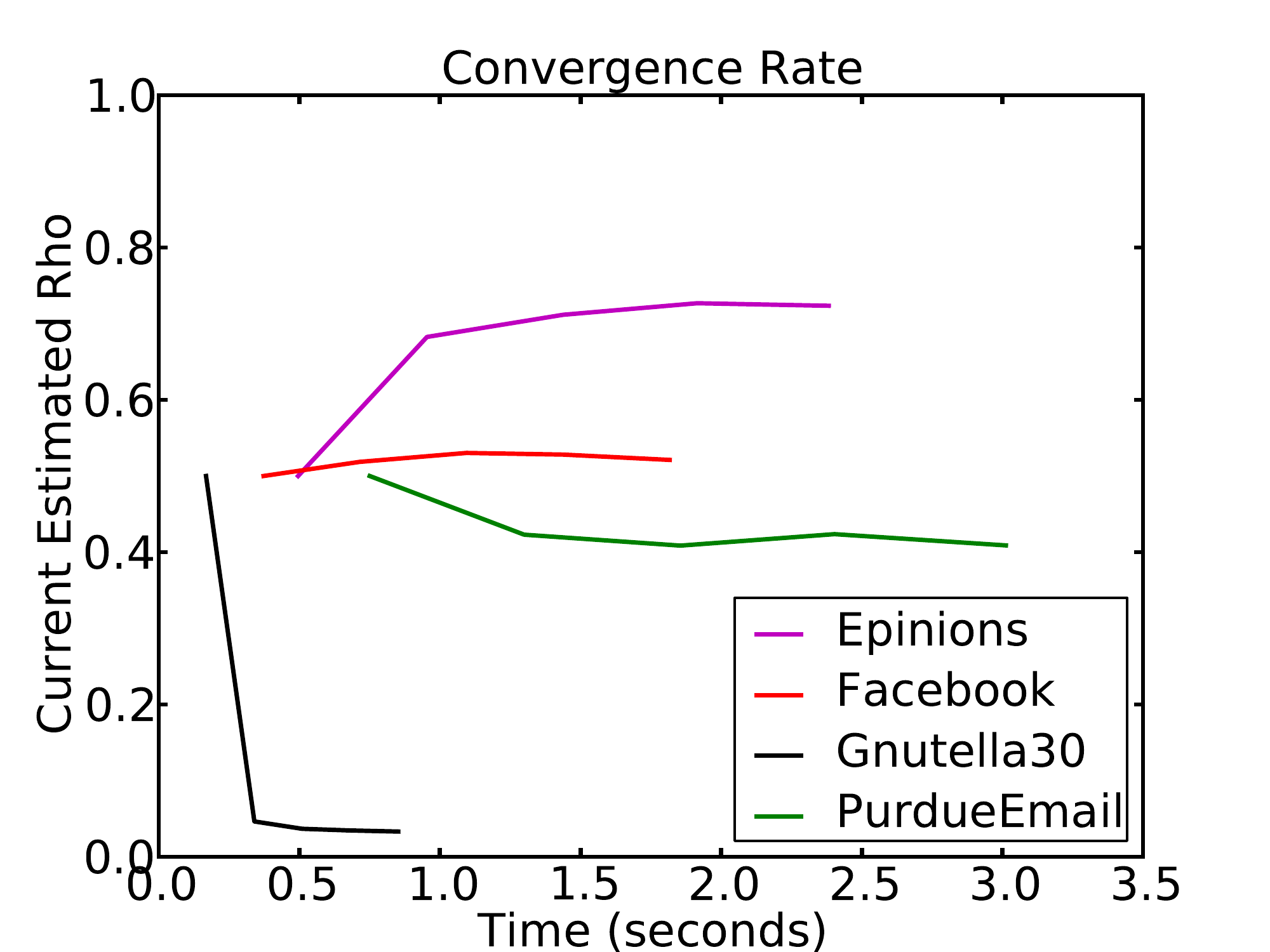}}
\caption{Convergences of the EM algorithm -- both in terms of time and number of iterations.  10000 samples per iteration.}
\label{fig:convergences}
\end{figure}

\section{Time Complexity}\label{sec:tc}
The methods presented for both generating a new network and for learning the parameter $\rho$ can be done in an efficient manner.

First, we need to bound the expected number of attempts to insert an edge into the graph.  Note that a node $v_i$ with $c$ edges has probability $\pi(i) = \frac{c}{M}$ of hitting its own edge on the draw from $\pi$; by extension, the probability of hitting its own edges $k$ times is $\pi(i)^k$.  This represents a geometric distribution which has the expected value of hits $H$ on the edges of the nodes being:
$$
\mathbf{E}\left[H|\pi(i)\right] = (1-\pi(i))\sum_{k=1}\pi(i)^{k-1} = \frac{1}{1-\pi(i)}
$$
This shows the expected number of attempts to insert an edge is bounded by a constant. 
As a result, we can generate the graph in $O(N+M)$, the same complexity as Chung Lu.  The initial steps of initializing our vector of node ids and running the basic CL model takes $O(N+M)$.  Next, we need to generate $M$ insertions while gradually removing the current edges.  This can be seen in lines 3-24 of Algorithm \ref{TCLalg}.  In this loop, the longest operations are selecting randomly from neighbors or removing an edge.  Both of these operations cost is in terms of the maximum degree of the network, which we assumed bounded, meaning those operations can be done in $O(1)$ time.  As a result, the total runtime of graph generation is $O(N+M)$.

For the learning algorithm, assume we have $I$ iterations which gather $s$ samples.  It is $O(1)$ to draw a node from the graph and $O(1)$ to choose a neighbor, meaning each iteration costs $O(s)$.  Coupled with the cost of creating the initial $\pi$ sampling vector, the total runtime is then $O(N+M+I\cdot s)$.

\begin{figure*}
\centering
\subfloat[Size]{
\small{
\begin{tabular}{|c|c|c|l|} \hline
Dataset & Nodes & Edges \\ \hline
Epinions & 75,888 & 811,480 \\ \hline
Facebook & 77,110& 500,178\\ \hline
Gnutella30 & 36,682& 176,656\\ \hline
PurdueEmail & 214,773 & 1,711,174\\ \hline
\end{tabular}}}\hfill
\subfloat[Learning Time]{
\small{
\begin{tabular}{|c|c|c|l|} \hline
Dataset & CL & KPGM & TCL \\ \hline
Epinions & N/A & 9,105.4s &  2.5s\\ \hline
Facebook & N/A& 5,689.4s & 2.0s \\ \hline
Gnutella30 & N/A& 3,268.4s & 0.9s\\ \hline
PurdueEmail &N/A & 8,360.7s & 3.0s\\ \hline
\end{tabular}}}\hfill
\subfloat[Generation Time]{
\small{
\begin{tabular}{|c|c|c|l|} \hline
Dataset & CL & KPGM & TCL \\ \hline
Epinions & 20.0s & 151.3s & 64.6s \\ \hline
Facebook &14.2s & 92.4s  & 30.8s \\ \hline
Gnutella30 & 4.2s & 67.8s & 7.0s\\ \hline
PurdueEmail & 61.0s & 285.6s & 141.0s \\ \hline
\end{tabular}}}
\caption{Dataset sizes, along with learning times and running times for each algorithm}
\label{fig:datasettable}
\end{figure*}

\section{Experiments}\label{sec:experiments}
For our experiments, we compared three different graph generating models.  The first is the fast Chung Lu (CL) generation algorithm with our correction for the degree distribution.  The second is Kronecker Product Graph Model (KPGM) implemented with code taken from the SNAP library\footnote{SNAP: Stanford Network Analysis Project. Available at http://snap.stanford.edu/snap/index.html} calculated by the authors\cite{Leskovec:Kronecker}.  Lastly, we compared the Transitive Chung Lu (TCL) method presented in this paper using the EM technique to estimate the $\rho$ parameter.  All experiments were performed in Python on a Macbook Pro, aside from the KPGM parameters which were generated on a desktop computer using C++\footnote{SNAP is written in C++}.  All of these networks were made undirected by reflecting the edges in the network, except for the Facebook network which is already undirected.

\subsection{Datasets}
To empirically evaluate the models, we learned model parameters from real-world graphs and then generated new graphs using those parameters.  We then compared the network statistics of the generated graphs with those of the original networks.  The four networks used are all large social networks, and their node and edge counts can be found in Figure \ref{fig:datasettable}.a.

The first dataset we analyze is the Epinions dataset \cite{EpinionsDataset}.  This network represents the users of Epinions, a website which encourages users to indicate other users whose consumer product reviews they `trust'.  The reviews of all users on a product are then weighted to incorporate both the reviewer ratings and the amount of trust received from other users.  The edge set of this network represents nominations of trustworthy individuals between the users.

Next, we study the collection of Facebook friendships from the Purdue University Facebook network.  In this network, the users can add each other to their lists of friends and so the edge set represents a friendship network.  This network has been collected over a series of snapshots for the past 4 years; we use nodes and friendships aggregated across all snapshots.

The Gnutella30 network is a different type than the other networks presented.  Gnutella is a Peer2Peer network where users are attempting to find seeds for file sharing \cite{Gnutella30}.  The user reaches out to its current peers, querying if they have a file.  If not, the friend refers them to other users who might have a file, repeating this process until a seed user can be found.  Because this network represents the structure of a file sharing program rather than true social interactions, it has significantly less clustering than the other networks.

Lastly, we study a collection of emails gathered from the SMTP logs of Purdue University \cite{Ahmed}.  This dataset has an edge between users who sent e-mail to each other.  The mailing network has a small set of nodes which sent out mail at a vastly greater rate than normal nodes; these nodes were most likely mailing lists or automatic mailing systems.  In order to correct for these `spammer' nodes, we remove nodes with a degree greater than $1,000$ as these nodes did not represent participants in any kind of social interaction.  The network has over two hundred thousand nodes, and nearly two million edges (Figure \ref{fig:datasettable}.a).

\subsection{Running Time}
In Figure \ref{fig:convergences} we can see the convergence of the EM algorithm when learning parameter $\rho$, both in terms of the number of iterations and in terms of the total clock runtime.  Due to the independent sample sets used for each iteration of the algorithm, we can estimate whether the sample set in each iteration is sufficiently large.  If the sample size is too small the algorithm will be susceptible to variance in the samples and will not converge.  Using Figure \ref{fig:convergences}.a we can see that after 5 iterations of 10,000 samples each our EM method has converged to a smooth line.

In addition to the convergence in terms of iterations, in Figure \ref{fig:convergences}.b we plot the wall time against the current estimated $\rho$.  The gap between $0$ and the start of the colored lines indicates the amount of overhead needed to generate our degree distribution statistic and $\pi$ sampling vector for the given graph (a step also needed by CL).  The Purdue Email network has the longest learning time at $3$ seconds.  For the same Email network, learning the KPGM parameters took approximately 2 hours and 15 minutes, so our TCL model can learn parameters from a network significantly faster than the KPGM model.

Next, the performance in terms of graph generation speed is tested, shown in Figure \ref{fig:datasettable}.c.  The maximum time taken to generate a graph by CL is $61$ seconds for the Purdue Email dataset, compared to 141 seconds to generate via TCL.  Since TCL must initialize the graph using CL and then lay its own edges, it is logical that TCL requires at least twice as long as CL.  The runtimes indicate that the transitive closures cost little more in terms of generation time compared to the CL edge insertions.  KPGM took 285 seconds to generate the same network.  The discrepancy between KPGM and TCL is the result of the theoretical bounds of each -- KPGM takes $O(M\log N)$ while TCL takes $O(M)$.

\begin{figure*}
\centering
\subfloat[Epinions]{
\includegraphics[width=.33\textwidth]{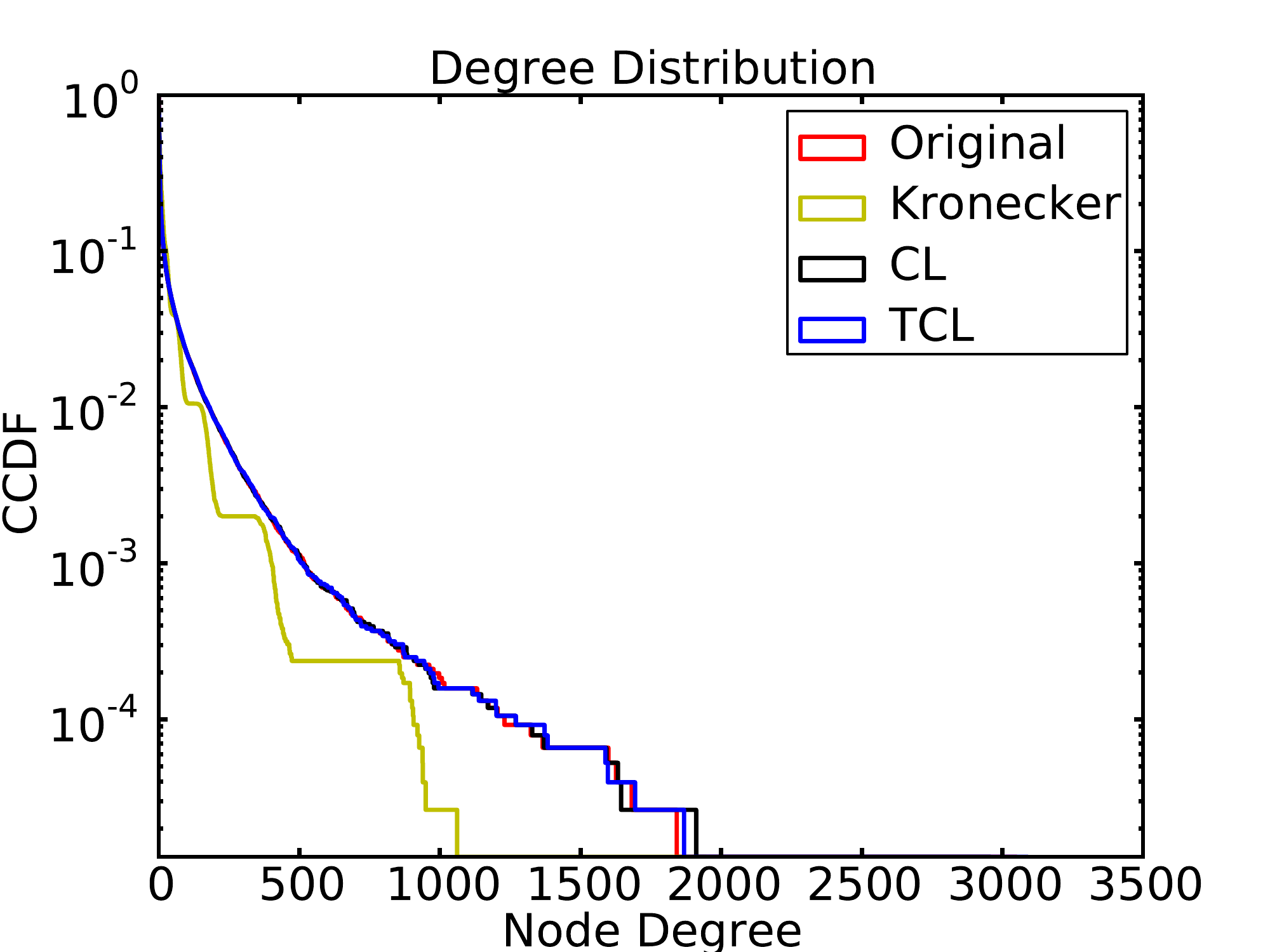}
\includegraphics[width=.33\textwidth]{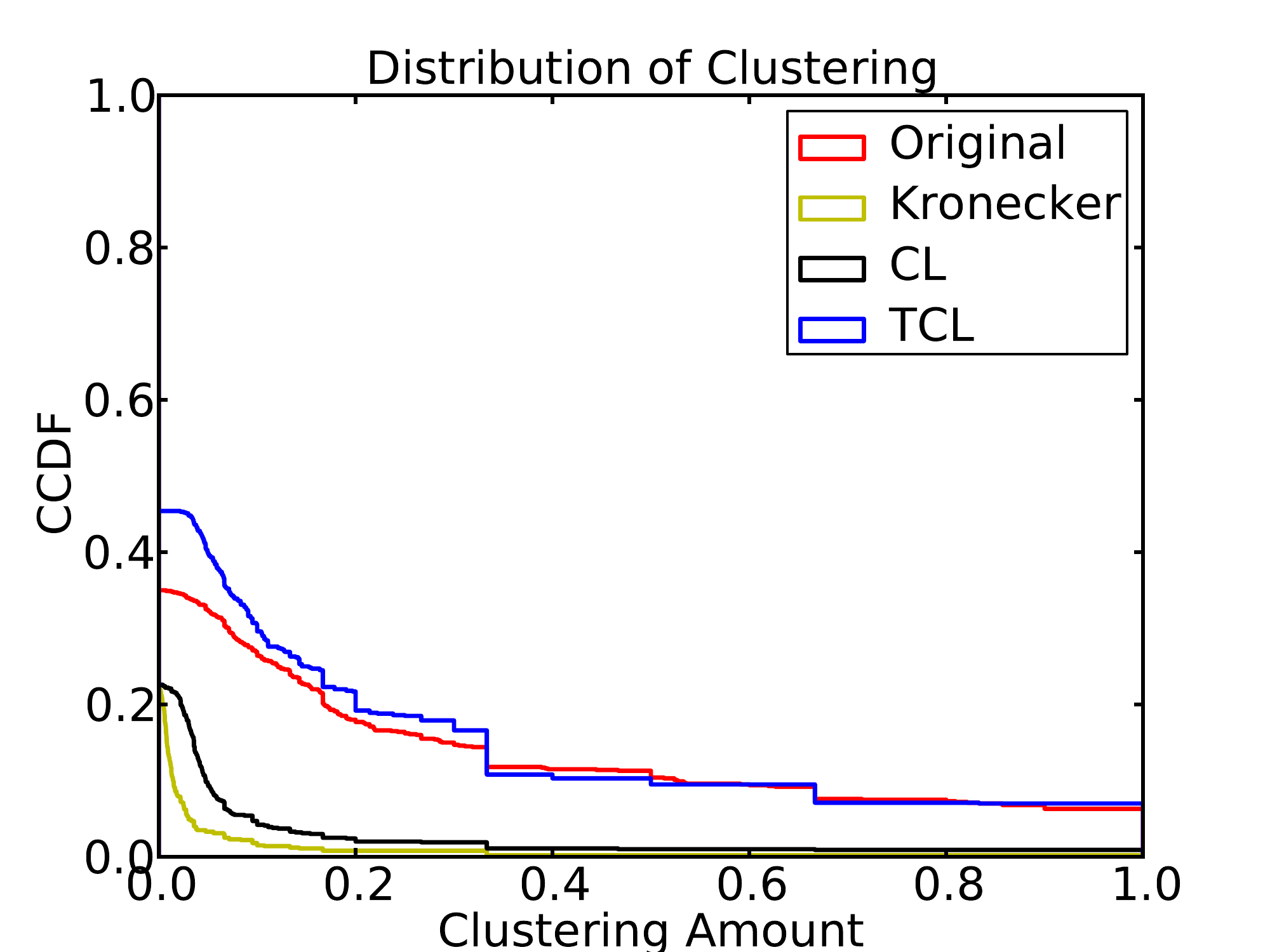}
\includegraphics[width=.33\textwidth]{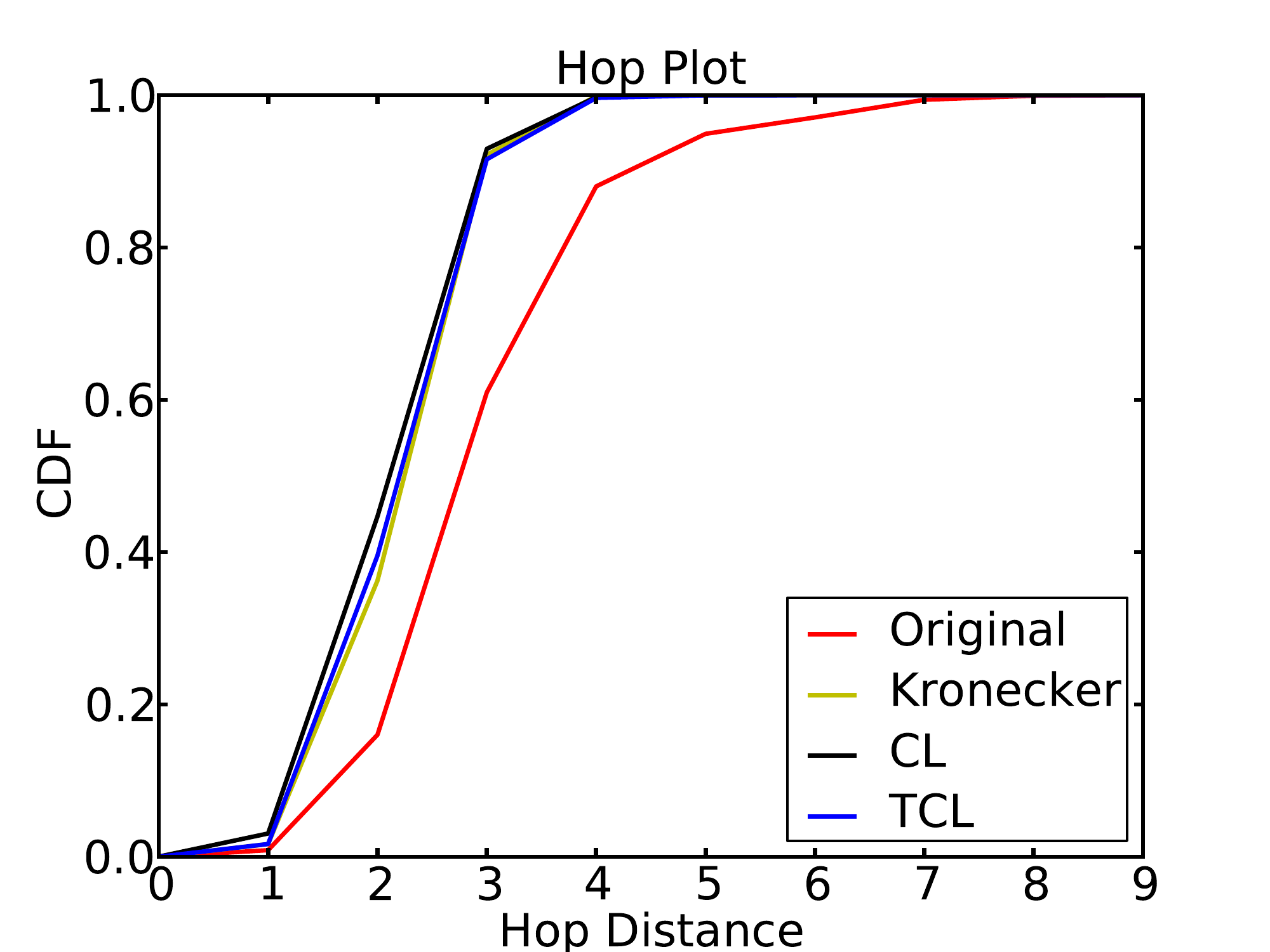}}

\subfloat[Facebook]{
\includegraphics[width=.33\textwidth]{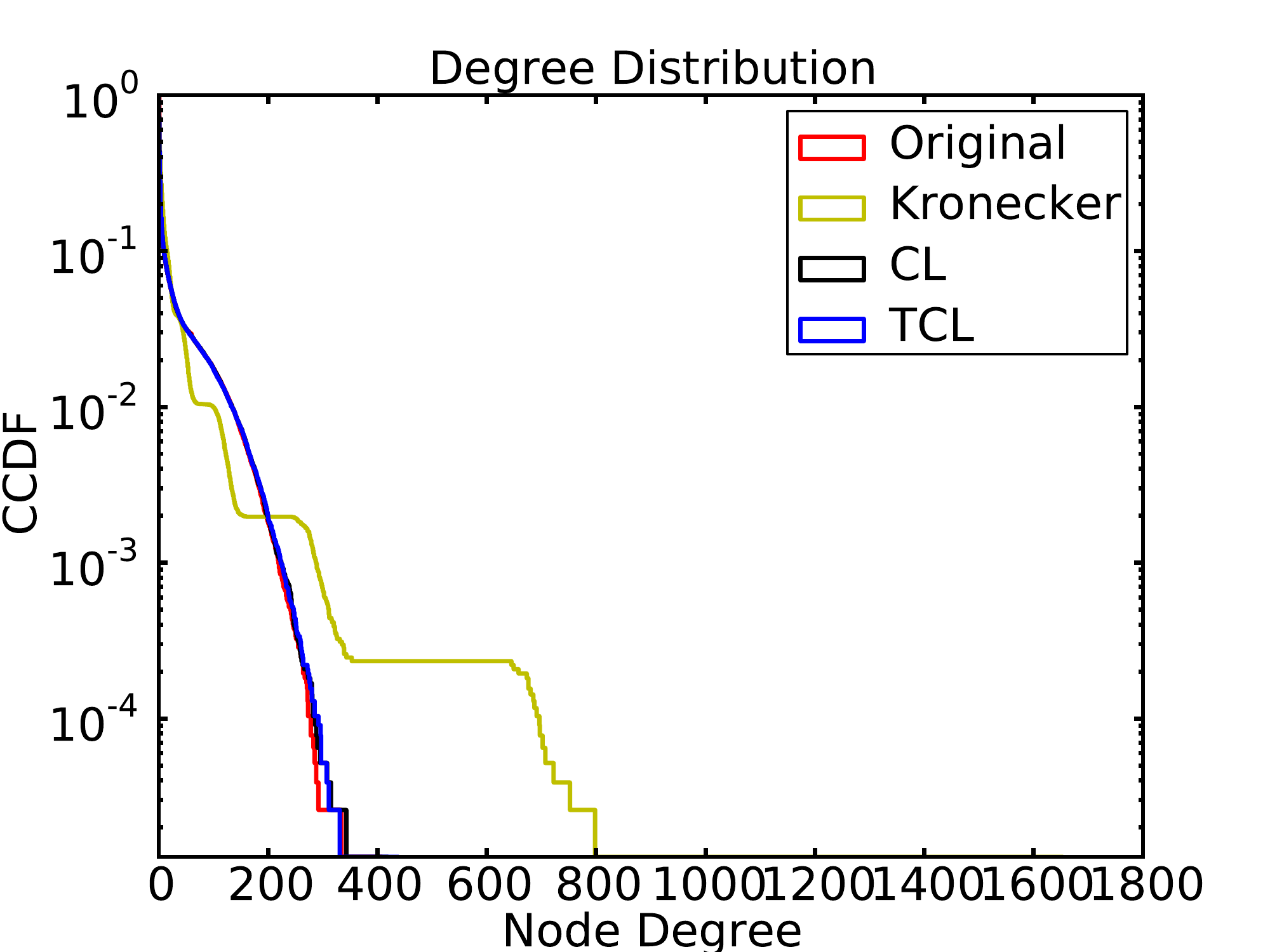}
\includegraphics[width=.33\textwidth]{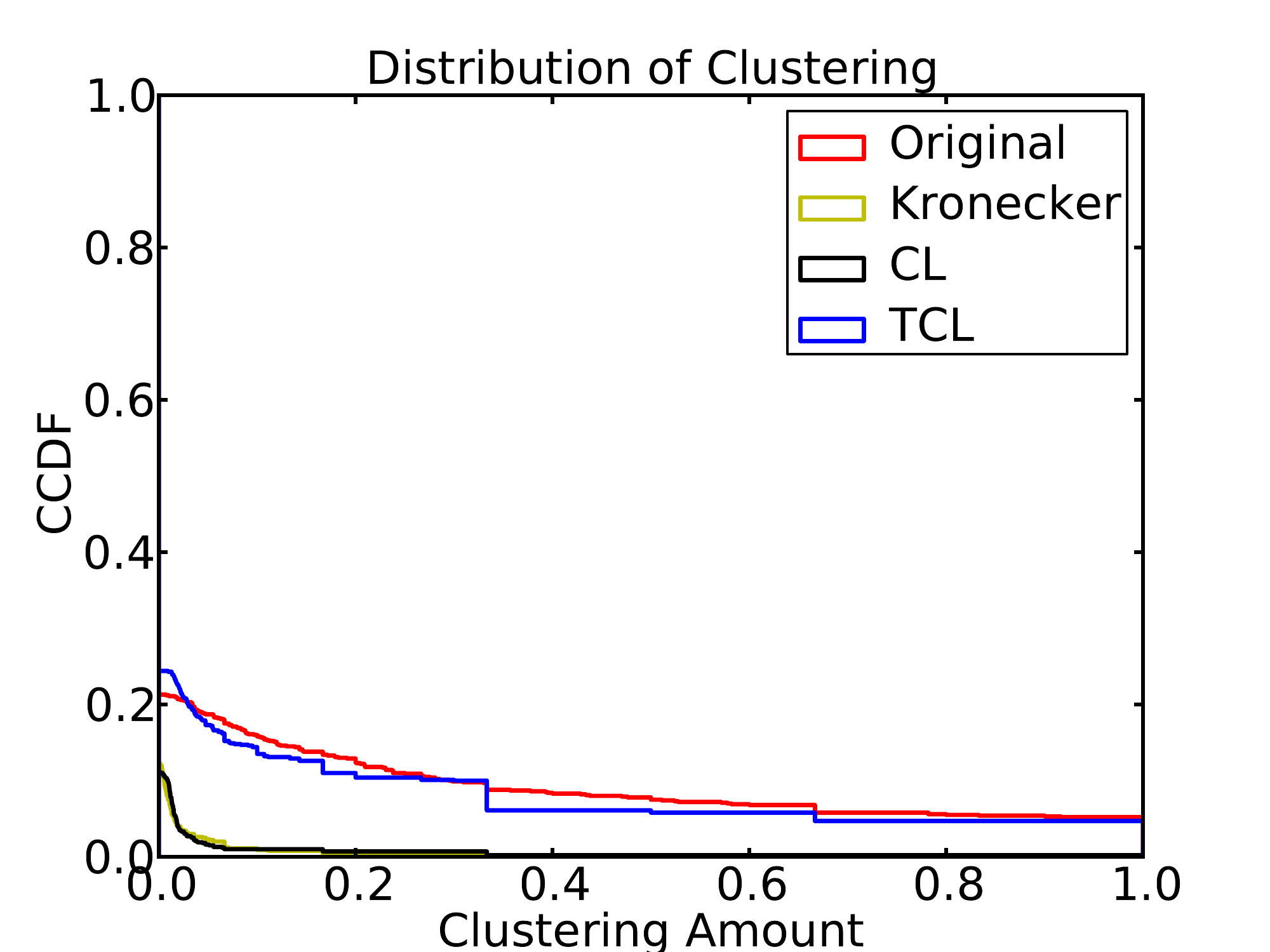}
\includegraphics[width=.33\textwidth]{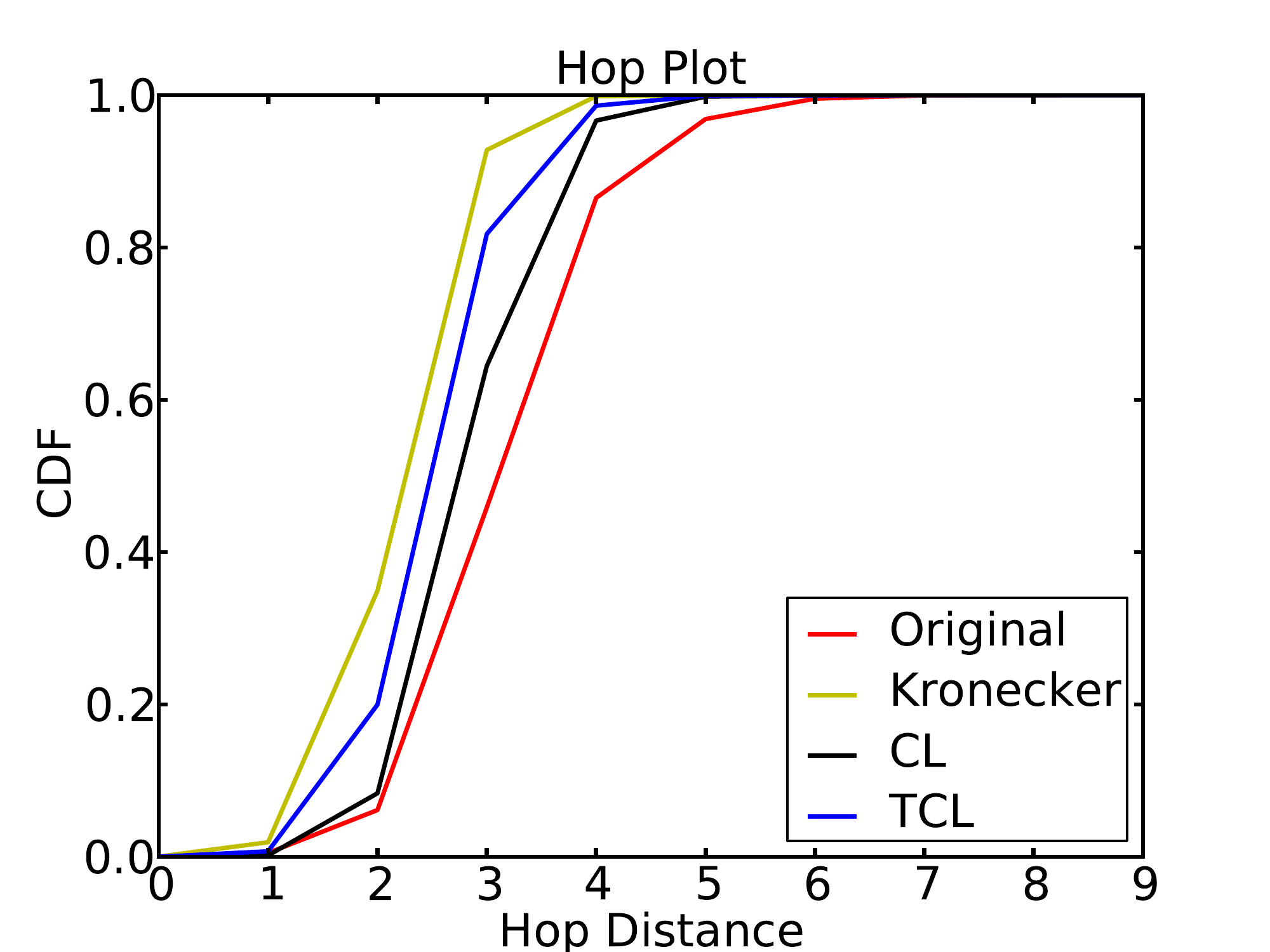}}

\subfloat[Gnutella30]{
\includegraphics[width=.33\textwidth]{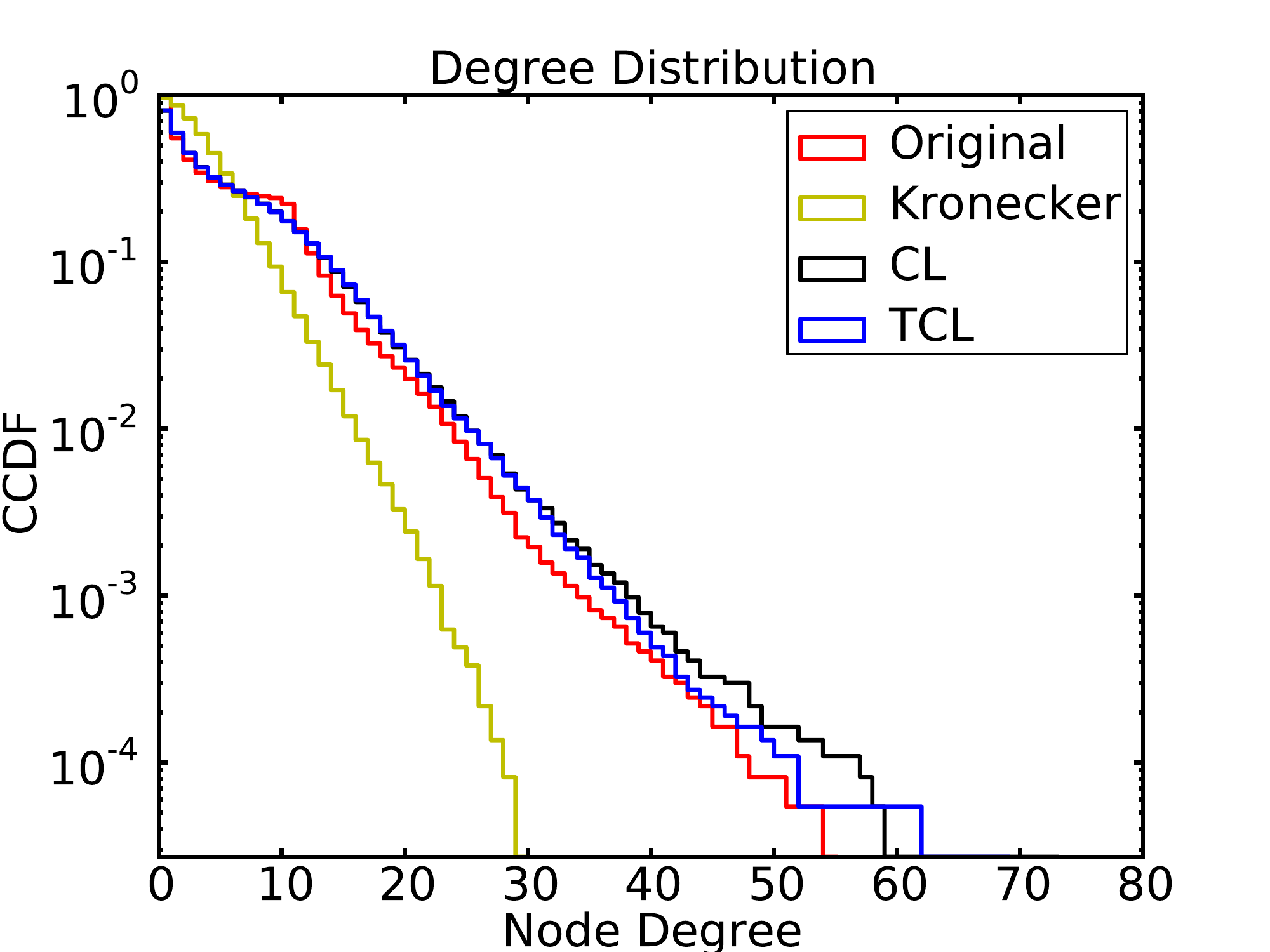}
\includegraphics[width=.33\textwidth]{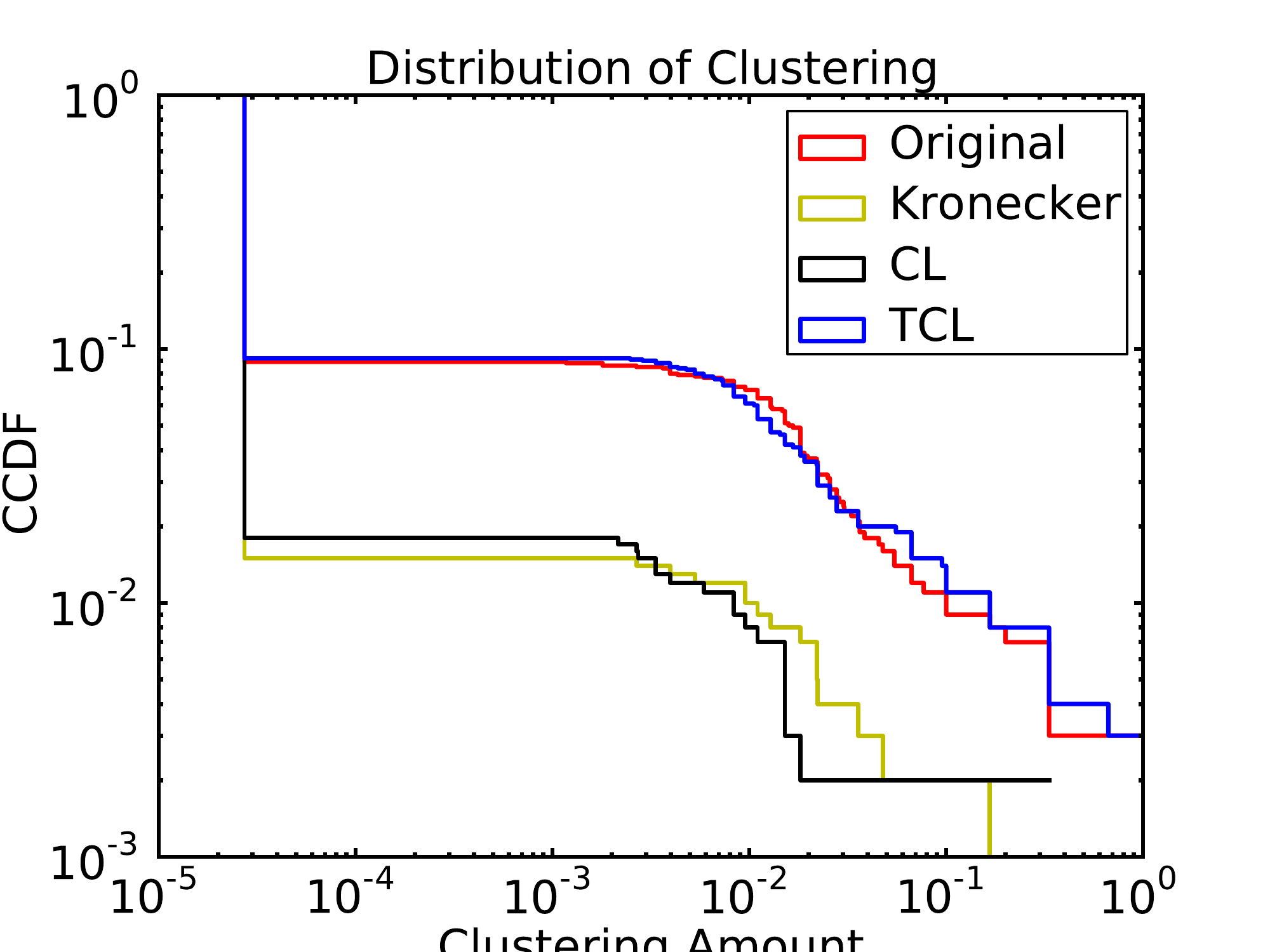}
\includegraphics[width=.33\textwidth]{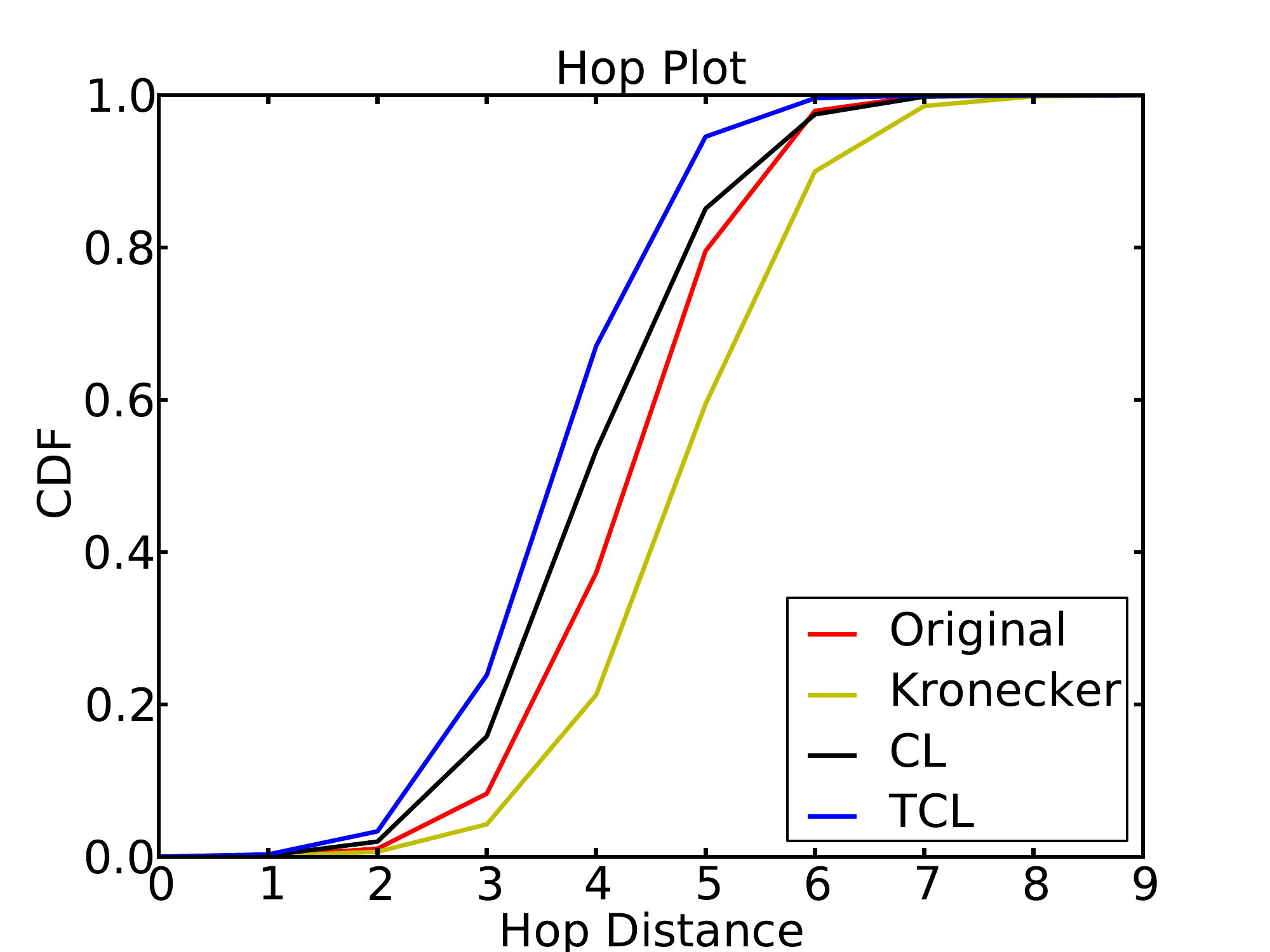}}

\subfloat[PurdueEmail]{
\includegraphics[width=.33\textwidth]{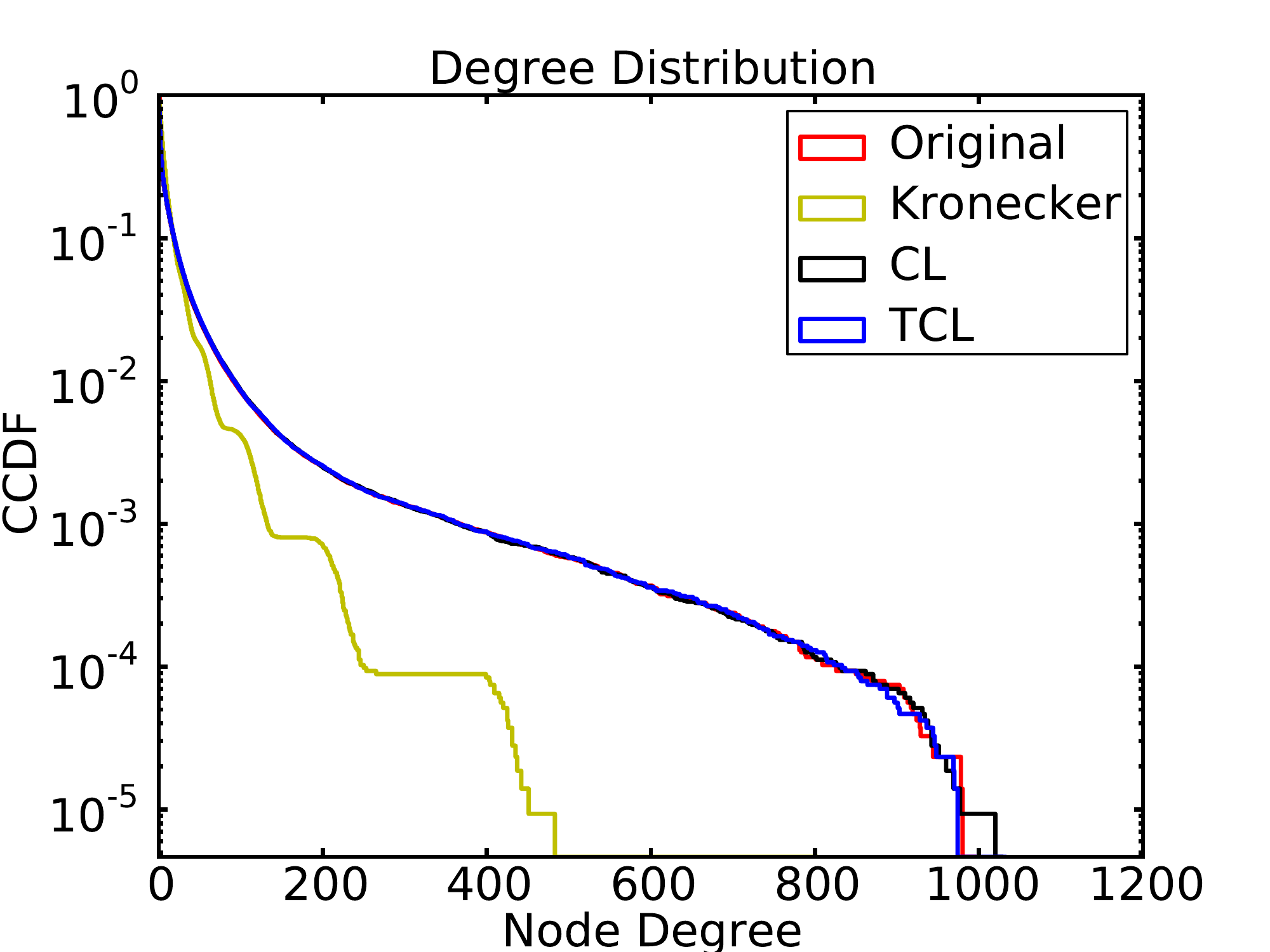}
\includegraphics[width=.33\textwidth]{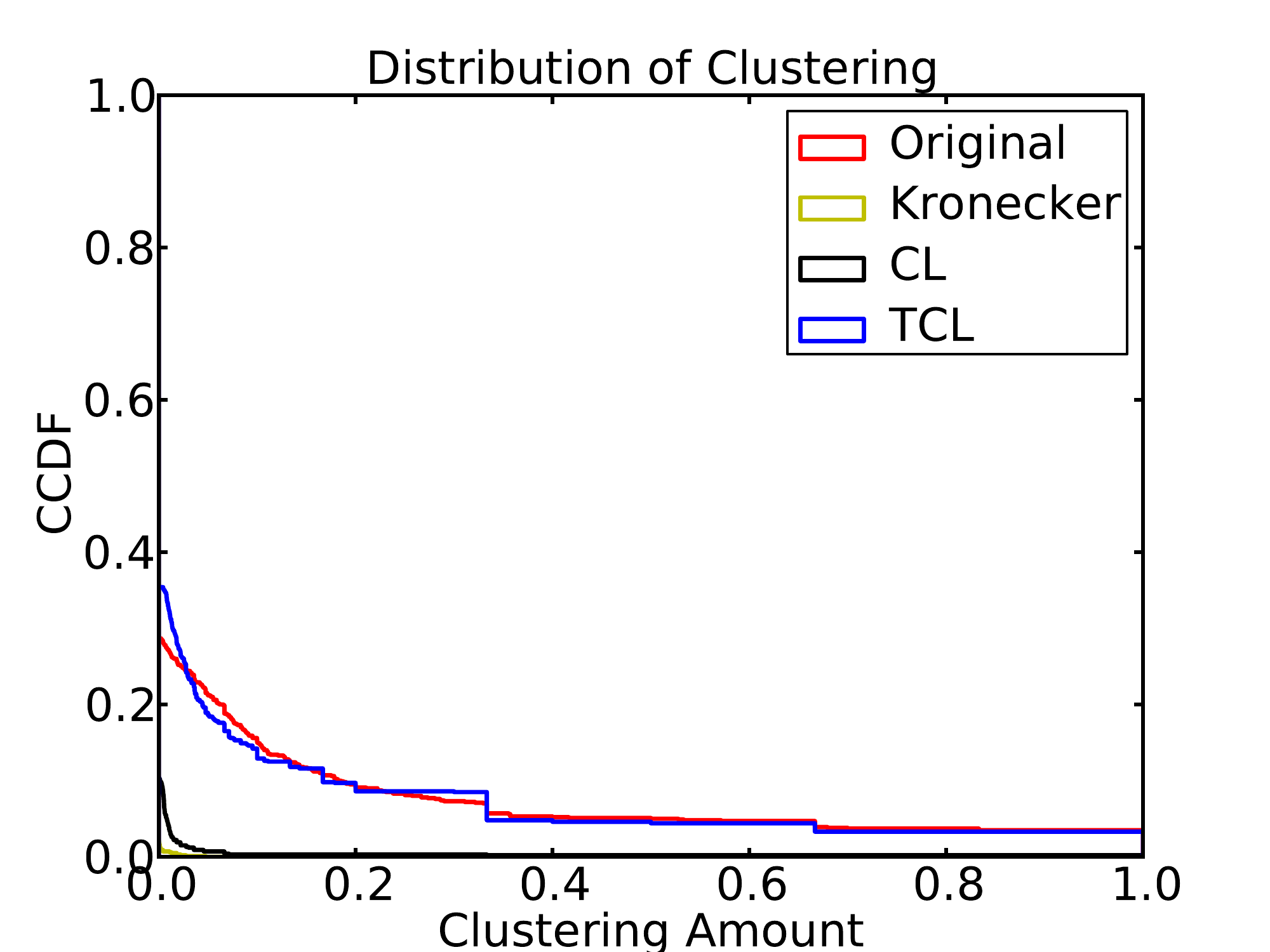}
\includegraphics[width=.33\textwidth]{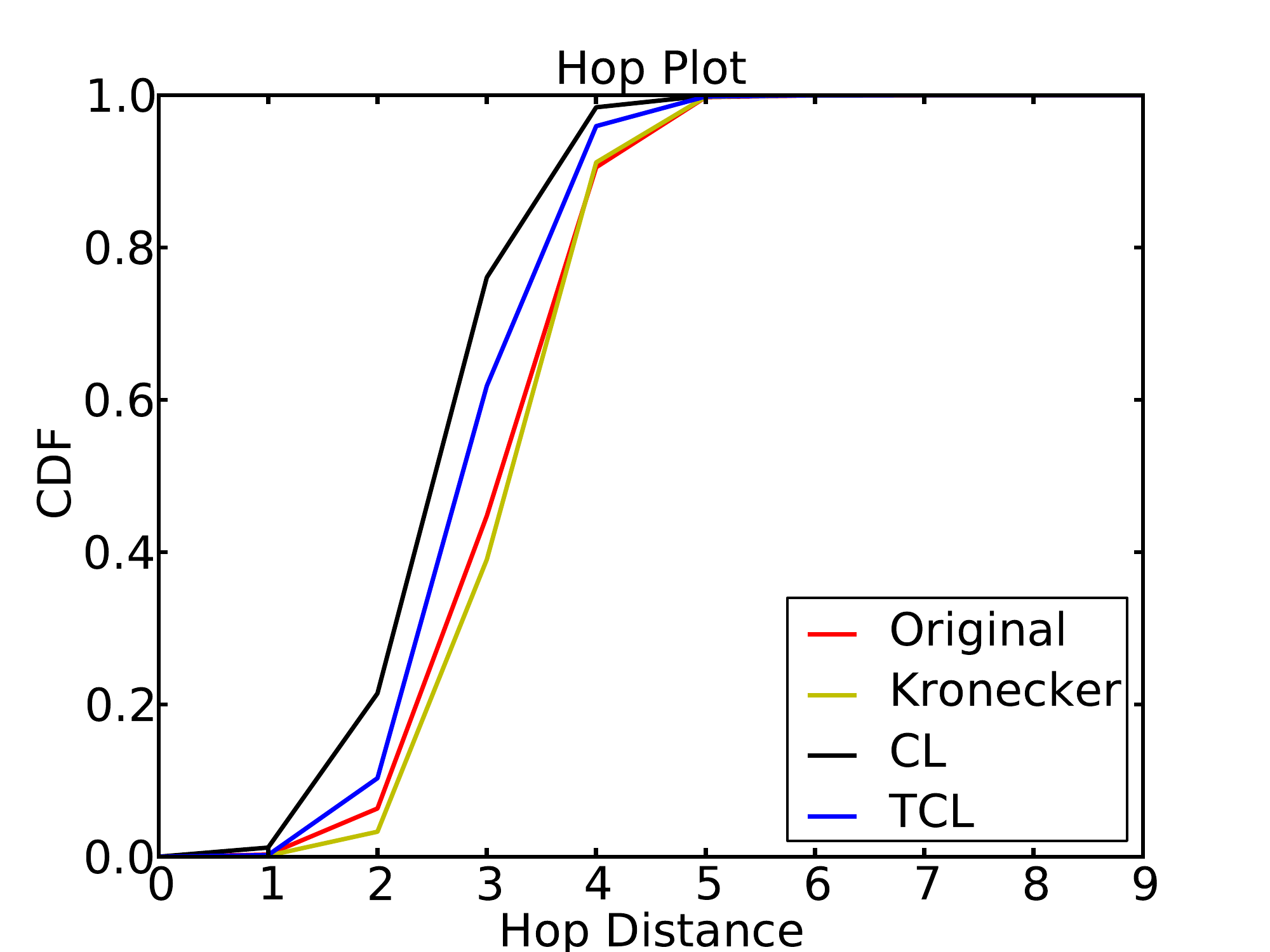}}

\caption{Degree distribution, clustering and hop plots for the Epinion, Facbook, Gnutella30 and PurdueEmail datasets.}
\label{fig:dd_cc_hp}
\end{figure*}

\subsection{Graph Statistics}
So far, the CL model has shown superiority in terms of learning and runtime to both TCL and KPGM, while TCL has distanced itself from KPGM in the same measures.  However, the ability to learn and generate large graphs quickly is only a portion of the task, as generating a network with little or no resemblance to the given network does not meet the primary goal of modeling the network.

In order to test the ability of the models to generate networks with similar characteristics to the original 4 networks, we compare them on three well known graph statistics: the degree distribution, the clustering coefficient and the hop plot.

Matching the degree distribution is the goal of both the CL and KPGM models, as well as the new TCL algorithm.  In the left hand column of Figure  \ref{fig:dd_cc_hp} ,the degree distributions of the networks generated from each model for each real-world is shown, compared against the original real-world networks' degree distribution.  The measure used along the y-axis is the complementary cumulative degree distribution (CCDF), while the x-axis plots the degree, meaning the y-value at a point indicates the percentage of nodes with greater degree.  The 4 networks have degree distributions of varying styles -- the 3 social networks (Epinions, Facebook, and PurdueEmail) have curved degree distributions, compared to Gnutella30 whose degree distribution is nearly straight, indicating an exponential cutoff. As theorized, both the CL and TCL have a degree distribution which closely matches their expected degree distribution, regardless of the distribution shape.  KPGM best matches the Gnutella30 network, sharing an exponential cutoff indicated by a straight line, but is still separated from the original network's distribution.  With the social networks KPGM has an alternating dip/flat line pattern which does not resemble the true degree distribution.  In contrast, TCL matches the distributions of all 4 networks with the same accuracy as the CL method, showing the model continues to match the degree distribution well even with the addition of transitive closures.

The next statistic we examine is TCL's ability to model clustering, as neither CL nor KPGM attempt to replicate the clustering found in social networks.  As with the degree, we plot the CCDF on the y-axis, but against the local clustering coefficient on the x-axis.  The clustering coefficient is a measure comparing the number of triangles in the network vs. the \emph{possible} number of triangles in the network, and a higher value indicates more clustering \cite{Watts_SmallWorld}.  On the network with the largest amount of clustering, Epinions, TCL matches the distribution of clustering coefficients well with the TCL distribution lying on top of the original distribution.  The same follows for Facebook and PurdueEmail, despite the large size of the latter.  The Gnutella30 has a remarkably low amount of clustering -- so low that it is plotted in log-log scale -- yet TCL is able to follow the distribution as well.  Furthermore, the networks exhibit a range of $\rho$ values, but the TCL EM estimation is able to accurately capture the clustering behavior of the original network.

In contrast, CL and KPGM cannot model the clustering distribution.  For each network, both methods lack appreciable amounts of clustering in their generated graphs, even undercutting the Gnutella30 network which has far less clustering than the others.  This shows a key weakness with both models, as clustering is an importation characteristic of small-world networks.

The last measure examined is the Hop Plot, in the right column of Figure \ref{fig:dd_cc_hp}.  The Hop Plot indicates how tightly connected the graph is; for each x-value, the y-value corresponds to the percentage of nodes that are reachable within that many hops.  When generating the hop plots, we excluded any nodes with infinite hop distance and discarded disconnected components and orphaned nodes.  All of the models followed the hop plots well, with TCL producing hop plots very close to the standard CL.  This indicates that the transitive closures of TCL did not impact the connectivity of the graph and the gains in terms of clustering can be obtained without altering the hop plot.


\section{Conclusions}\label{sec:conclusions}
In this paper we demonstrated a correction to the Chung Lu fast estimation algorithm and introduced the Transitive Chung Lu model.  Given a real-world network, the TCL model learns and generates a graph which accurately captures the degree distribution, clustering coefficient distribution and hop plot found in the training network.  We proved the algorithm generates a network in $O(M)$, on the order of CL and faster than KPGM.  The amount of clustering in the generated network is controlled by a single parameter, and we demonstrated how estimating the parameter is several orders of magnitude faster than estimating KPGM.  The networks generated by our TCL algorithm exhibit characteristics of the original network, including degree distribution and clustering, unlike the graphs generated by CL and KPGM.
Future directions for these results are numerous, including analysis of networks over time and methods which explore extrapolating a larger graph from a given graph.  Lastly, while our analysis has TCL generating networks which match the degree distributions and clustering of a real-world network, usage of a transitivity parameter for clustering is still a heuristic approach.  A more formal analysis of the clustering expected from such a model would be worth pursuing.

\bibliographystyle{abbrv}   
{\small
\bibliography{Pfeiffer_TCL_Arxiv}  
	}		     

\end{document}